%% file: paper.tex
\documentclass[letterpaper,twocolumn,10pt]{article}
\usepackage{usenix}

\usepackage{hyperref}

\usepackage{graphicx}
\usepackage{booktabs}
\usepackage{etoolbox}
\usepackage{appendix}
\usepackage[linesnumbered,ruled,vlined]{algorithm2e}

\usepackage{enumitem,amssymb}

\usepackage{glossaries}
\glsdisablehyper

\usepackage{subcaption}
\usepackage{xspace}

\usepackage{amsthm}
\usepackage{amsmath}
\usepackage[utf8]{inputenc}

\usepackage[capitalise]{cleveref}
\makeatletter
\newcommand{\crefnames}[3]{%
  \@for\next:=#1\do{%
    \expandafter\crefname\expandafter{\next}{#2}{#3}%
  }%
}
\makeatother
\crefnames{part,chapter,section}{\S}{\S\S}
\crefformat{section}{\S#2#1#3}
\Crefformat{section}{\S#2#1#3}
\crefmultiformat{section}{\S\S#2#1#3}{ and~#2#1#3}{, #2#1#3}{, and~#2#1#3}
\Crefmultiformat{section}{\S\S#2#1#3}{ and~#2#1#3}{, #2#1#3}{, and~#2#1#3}

\usepackage{xcolor}

\patchcmd{\abstract}{\vspace*{...}}{}{}{}
\usepackage{titlesec}

\input{sections/mycommands}

\begin{document}

\title{
\Large \bf Atlas: Efficient Verifiable Semantic Search \\
}
\author{
{\rm Nikolay Avramov\textsuperscript{1}, Hidde Lycklama\textsuperscript{1}, Alexander Viand\textsuperscript{2}, Anwar Hithnawi\textsuperscript{1}}  
\\
{\textsuperscript{1}\textit{University of Toronto} \ \textsuperscript{2}\textit{Belfort Labs} }
\vspace{12pt}
}

\makeatletter
\patchcmd{\maketitle}
	{\@maketitle}
	{\vspace{-2em}\@maketitle\vspace{-1em}}%
	{}
	{}
\makeatother

\date{}

\maketitle

\begin{abstract}
Semantic search is a core primitive of modern applications, powering recommender systems, web search, and retrieval-augmented generation for language models. The provider controls the index and query execution, leaving clients to trust that results come from the right algorithm over the intended index. A provider may truncate search to cut cost, bias results, or otherwise deviate from the specified execution undetected. Verifiability can remove this trust assumption by proving that results follow the agreed algorithm over a committed index. Realizing this efficiently is hard, as retrieval at scale relies on HNSW, a graph-based algorithm whose data-dependent traversal maps poorly onto the fixed constraint systems of zero-knowledge proofs. Prior verifiable systems therefore target regular, cluster-based indices that are easier to encode, sacrificing the recall of graph-based search. We present \oursystem, a system that lets a provider prove a query was answered correctly against its committed index without revealing the index. At its core is a new zero-knowledge proof for HNSW search, built on three techniques: preprocessing that shifts all database-dependent cost offline, so per-query proving scales with the traversal rather than the database; a restructuring of HNSW into a fixed-size-state procedure that we prove returns the same result; and a timestep-tagged batching that merges the per-step arguments of the entire traversal into one. \oursystem is the first to demonstrate verifiable graph-based search at scale, proving a query in under a second on the SIFT1M benchmark and in 2.0 seconds at 100 million vectors, while maintaining the recall of plaintext HNSW and revealing nothing about the index beyond the result. In a complete RAG pipeline, \oursystem' proven retrieval preserves end-to-end answer quality, and reaches higher quality at lower proving cost than all prior verifiable retrieval systems.
\end{abstract}

\glsresetall

\input{sections/introduction}

\input{sections/background}
\input{sections/relatedwork}
\input{sections/overview}

\input{sections/design}
\input{sections/evaluation}
\begin{appendices}

  \crefalias{section}{appendix}%
  \crefname{appendix}{Appendix}{Appendices}%
  \Crefname{appendix}{Appendix}{Appendices}%

\bibliographystyle{plain}
\interlinepenalty=10000 %
\bibliography{references}

\input{sections/appendix}

\end{appendices}

\end{document}

%% file: sections/mycommands.tex
\usepackage{xspace}

\newcommand{\oursystem}{Atlas\xspace}

\newcommand{\fakeparagraph}[1]{\vskip 5pt\noindent\textbf{#1. }}

\usepackage{xurl}
\usepackage{aliascnt}
\usepackage{setspace}
\SetAlCapFnt{\small}
\SetAlgoNlRelativeSize{-1}
\SetAlgoInsideSkip{smallskip}
\SetAlCapSkip{2pt}
\SetAlgoSkip{}
\SetAlgoInsideSkip{}
\newaliascnt{theorem}{definition}
\newtheorem{theorem}[theorem]{Theorem}
\aliascntresetthe{theorem}
\crefname{theorem}{theorem}{theorems}
\Crefname{theorem}{Theorem}{Theorems}

\newaliascnt{lemma}{definition}
\newtheorem{lemma}[lemma]{Lemma}
\aliascntresetthe{lemma}
\crefname{lemma}{lemma}{lemmas}
\Crefname{lemma}{Lemma}{Lemmas}

\newcommand{\polyid}{\ensuremath{\mathit{id}}\xspace}
\newcommand{\polyl}{\ensuremath{\ell}\xspace}
\newcommand{\polyd}{\ensuremath{d}\xspace}
\newcommand{\di}[2]{\ensuremath{\mathsf{d}(#1, #2)}\xspace}

\newcommand{\polye}{\ensuremath{e}\xspace}
\newcommand{\polysel}{\ensuremath{\textit{sel}}\xspace}

\newcommand{\polycan}{\ensuremath{\mathit{id}_\text{can}}\xspace}
\newcommand{\polynb}{\ensuremath{\mathit{id}_\text{nb}}\xspace}

\newcommand{\polyidrest}{\ensuremath{\mathit{id}_\text{rest}}\xspace}
\newcommand{\polydrest}{\ensuremath{\mathit{d}_\text{rest}}\xspace}

\newacronym{zkp}{ZKP}{zero-knowledge proof}
\newacronym[longplural=zero-knowledge Succinct Non-Interactive Arguments of Knowledge]{zksnark}{zk-SNARK}{zero-knowledge Succinct Non-Interactive Argument of Knowledge}
\newacronym{piop}{PIOP}{Polynomial Interactive Oracle Proof}
\newacronym{pcs}{PCS}{Polynomial Commitment Scheme}

\newcommand{\Eq}{\mathsf{Eq}}
\newcommand{\Mem}{\mathsf{Mem}}
\newcommand{\restr}[2]{#1\big|_{#2}}
\newcommand{\Selg}{\mathsf{Sel}_g}
\newcommand{\Nbg}{\mathsf{Nb}_g}
\newcommand{\Selb}{\mathsf{Sel}_b}
\newcommand{\Canb}{\mathsf{Can}_b}
\newcommand{\Nbb}{\mathsf{Nb}_b}
\newcommand{\Canbm}{\mathsf{Can}_b^{-}}
\newcommand{\selq}[1]{q_{#1}}
\newcommand{\polytag}{\tau}

\newcommand{\polydsel}{\mathit{d}_{\mathrm{sel}}}

\newcommand{\polyidrep}[1]{\mathit{id}^{(#1)}}
\newcommand{\polylrep}[1]{\ell^{(#1)}}
\newcommand{\polydcan}{\mathit{d}_{\mathrm{can}}}
\newcommand{\polydnb}{\mathit{d}_{\mathrm{nb}}}
\newcommand{\polydcansub}{\tilde{d}_{\mathrm{can}}}
\newcommand{\polydnbsub}{\tilde{d}_{\mathrm{nb}}} 
\newcommand{\polydworst}{\mathit{d}_{\mathrm{worst}}}

\newcommand{\Temb}{T_{\mathrm{emb}}}
\newcommand{\Ted}{T_{\mathrm{ed}}}
\newcommand{\Trange}{T_{\mathrm{range}}}
\newcommand{\dinf}{d_{\infty}}

\newcommand{\polyproc}{\mathit{proc}}
\newcommand{\polycons}{\mathit{cons}}
\newcommand{\polyprocrest}{\mathit{proc}_{\mathrm{rest}}}
\newcommand{\polyidsel}{\mathit{id}_{\mathrm{sel}}}

\crefname{table}{table}{tables}
\Crefname{table}{Table}{Tables}

%% file: sections/introduction.tex
\section{Introduction}

Recommender systems, web search, and retrieval-augmented generation (RAG) for large language models increasingly rely on semantic search~\cite{lewis2020rag}. 
Semantic search embeds items and queries as vectors and returns the nearest items to the query in this space. Because proximity is intended to capture semantic similarity, it can retrieve relevant content even when it has little lexical overlap with the query.
Approximate nearest-neighbor (ANN) search~\cite{hnsw, aumuller2020annbenchmarks} makes this practical at scale: instead of computing exact nearest neighbors, it returns sufficiently close points, enabling low-latency search over billions of vectors. This ability to retrieve by meaning at massive scale has made semantic search a default mechanism for finding relevant content in modern systems.

Retrieval-grounded AI systems are only as reliable as the passages they retrieve: production systems have returned incorrect answers after surfacing stale or erroneous content~\cite{microsoft-bug-report}, and attackers have exploited the retrieval path, planting documents that hijack generation once retrieved~\cite{some-prompt-injection-rag-attack}. Semantic search also plays a role in content provenance, where it is used to detect copyrighted material in AI pipelines or attribute generated content to rightholders~\cite{sureel.ai,sureel-patent-12567396,sureel-patent-12554767,cite-prorata.ai,copyleaks}. In such settings, the incentives to deviate from the intended protocol can be substantial. ANN search likewise underpins recommendation systems for social networks, e-commerce, advertising, and dating platforms~\cite{cite-snowflake,cite-databrick-hnsw}.
In each of these settings, the search is executed by a third-party provider that controls both the index and the computation, while the client sees only the returned results. This opacity leaves room for failures and/or deliberate deviations. A provider may serve stale results, truncate the search to cut costs, bias the ranking toward preferred outcomes, or quietly depart from the specified algorithm. These risks compound as semantic search is increasingly embedded in automated pipelines spanning multiple services that run on third-party infrastructure, where retrieved results are consumed directly by downstream services with no human oversight. These concerns have already drawn public scrutiny and regulatory interest~\cite{eu-dsa,senate-bill-5339}. Clients therefore need a way to verify semantic-search correctness: a guarantee that the returned set is exactly the one produced by the specified algorithm over an index to which the provider committed in advance.

\Glspl{zkp}~\cite{goldwasser1985knowledge, groth16} can, in principle, provide such a guarantee.
The provider generates a succinct proof that the returned search result is correct, which the client can verify without learning anything about the database beyond the result itself.
Realizing this for semantic search, however, is challenging. The de facto standard for ANN search at scale is HNSW~\cite{hnsw}. HNSW is control-flow heavy by construction: it navigates a layered proximity graph using priority queues, data-dependent branching, and early exits. Such computation maps poorly onto the arithmetic constraint systems underlying modern ZKPs, which require circuits to be fixed in advance and provisioned for worst-case behavior at every step.

To overcome this, prior verifiable search systems avoid HNSW and instead build on cluster-based indices~\cite{v3db,verirag}.
These methods partition the dataset into clusters offline and answer a query by searching the clusters nearest to it.
This fixed, data-independent search pattern encodes compactly as polynomial constraints.
This data-independence, however, sacrifices recall, as the clusters are selected once and a closer neighbor outside them is unreachable for the remainder of the search.
HNSW, by contrast, selects nodes throughout the search, conditioning each step on measured distances to the query, and its candidates therefore keep improving whereas a cluster-based search is confined to its initial selection.
Cluster-based systems thus trade retrieval quality for provability, and the adaptivity that quality requires is exactly what a fixed constraint system struggles to capture.

In this paper, we present \oursystem, an efficient zero-knowledge proof system for HNSW search.
\oursystem enables a provider to prove that a query was answered correctly against a committed index, while retaining the graph-based traversal on which modern retrieval systems rely.
The proven search matches the recall of plaintext HNSW at lower proving cost than all existing verifiable search systems, cluster-based and graph-based alike.
\oursystem achieves these results through three key ideas.
\emph{(i) Preserving HNSW's sublinear per-query cost.}
HNSW answers a query by visiting only $O(\log N)$ of the index's $N$ nodes.
Every visited node and traversed edge, however, must be proven consistent with the committed index.
A naive consistency check reads the entire committed index, and the $O(\log N)$ search then incurs a linear proving cost.
We decouple the consistency check from the traversal and defer its database-dependent cost to a preprocessing phase via cq lookup arguments~\cite{cq}.
The per-query proving cost then scales with the length of the traversal rather than the size of the database, preserving HNSW's sublinearity.
\emph{(ii) Restructuring HNSW.}
We eliminate nearly all data-dependent control flow from HNSW by reformulating search as a procedure over fixed-size state and a fixed number of steps.
Concretely, we collapse the multi-layer descent into a single graph walk and replace the two interdependent priority queues with one bounded candidate set.
This reformulation substantially shrinks the trace to be proven, as each step merges an entire neighborhood into the set in a single batched update rather than one node at a time.
These structural changes preserve the search exactly, as we prove the reformulated output equivalent to HNSW's.
The step bound is the only point at which the reformulated search differs from HNSW, with sufficiently large budgets recovering HNSW's result exactly and smaller ones lowering proving cost at the expense of recall.
\emph{(iii) Compacting the per-step constraints.} 
A naive encoding instantiates separate permutation and lookup arguments for each traversal step over the corresponding region of the witness. Each instantiation introduces its own auxiliary polynomials and polynomial identities, causing the proving overhead to grow with the step budget.
We introduce timestep-tagged arguments, which tag every tuple with a step identifier and merge the per-step arguments of the entire traversal into a single invocation.

\paragraph{Evaluation Summary.} 
We implement \oursystem  and evaluate it on six standard ANN benchmarks, spanning one million to one hundred million vectors and embedding dimensions from 96 to 960, as well as on end-to-end RAG pipelines with learned text embeddings (\S\ref{sec:evaluation}). 
\oursystem proves a query over SIFT1M in 0.8 seconds, with a 16.5 kB proof that the client verifies in 40 milliseconds.
Proving cost scales with the beam width and step budget needed to reach a recall target, not with database size, and the 100$\times$ larger BIGANN-100M raises proving time only 2.5$\times$, to 2.0 seconds.
Truncation to a fixed step budget sacrifices little recall, as at 95th-percentile budgets the proven search stays within 0.8 recall@1 points of plaintext HNSW on integer datasets and every dataset exceeds 0.9 recall@1 at a deployable configuration.
In a complete RAG pipeline, \oursystem preserves this quality end to end, attaining higher answer quality at lower proving cost than prior verifiable retrieval systems.

%% file: sections/background.tex
\section{Background}

This section reviews the building blocks of our construction: zk-SNARKs from polynomial interactive oracle proofs (\S\ref{sec:bg-snarks}), the permutation and lookup arguments (\S\ref{sec:bg-arguments}), and the interface extensions used throughout (\S\ref{sec:bg-extensions}).

\subsection{zkSNARKs from Polynomial IOPs}
\label{sec:bg-snarks}%
A \gls{zkp} is a protocol in which a prover convinces a verifier that a statement holds without revealing anything beyond its validity.
A \gls{zksnark} is a \gls{zkp} that is additionally succinct and non-interactive, producing a single short proof that a computation on private inputs was executed correctly.
Concretely, the prover proves knowledge of a witness $w$ for a public statement $x$ of an NP relation $R$, i.e.\ $(x, w) \in R$, without revealing $w$ to the verifier.
A standard construction of \glspl{zksnark} combines a \gls{piop} with a \gls{pcs}~\cite{bfs20,plonk}.
In a \gls{piop}, the prover's messages are polynomials, provided to the verifier as oracles that can be queried at a small number of points.
A PCS instantiates the oracles with commitments: the prover sends a short commitment in place of each polynomial and answers each query with an evaluation proof.

A \gls{piop} enforces the correctness of a computation as a set of polynomial identities representing its execution trace.
We focus on univariate \glspl{piop}, in which the trace is encoded as evaluations of polynomials over a multiplicative subgroup $H = \{1, \omega, \omega^2, \dots, \omega^{n-1}\} \subseteq \mathbb{F}$ of order $n$.
Each length-$n$ sequence of trace values $a_0, \dots, a_{n-1}$ is interpolated as the unique polynomial $p$ of degree below $n$ with $p(\omega^i) = a_i$.
The identities take the form $g(p_1(X), \dots, p_k(X)) = 0$, required to hold at every point of $H$.
An identity holds over all of $H$ exactly when its polynomial is divisible by the vanishing polynomial $Z_H(X) = X^n - 1$, and the committed quotient serves as the proof, verified at a random evaluation point.

Every computation defines its own trace polynomials, but the constraints imposed on them are largely generic and the literature formalizes the recurring ones as self-contained arguments with their own auxiliary polynomials and identities.
Many recurring constraints reduce to two relations between multisets of values, enforced by the \emph{permutation argument}, proving two sequences are reorderings of each other, and the \emph{lookup argument}, proving every value of one sequence occurs in a second~\cite{plonk,haboeck2022logup}. 
The \emph{copy constraint}, a specialized instance of the permutation argument, is the central mechanism connecting the polynomial identities of modern PIOPs by enforcing the consistency of a value's copies across them~\cite{plonk}.
The lookup argument in turn replaces constraints that are expensive to encode arithmetically with membership in a precomputed table, as in range checks~\cite{plookup,haboeck2022logup}.

\subsection{Permutation and Lookup Arguments}
\label{sec:bg-arguments}
Permutation and lookup arguments reduce claims about multisets to polynomial identities by encoding each multiset as a product over its elements.
Concretely, a multiset of field elements defines the polynomial $\prod_v (X + v)$, with one factor per occurrence of an element $v$.
A polynomial factors into linear terms in exactly one way, and two multisets are therefore equal exactly when their product polynomials are equal.

\fakeparagraph{Permutation Argument}The permutation argument proves that two polynomials take the same multiset of values over the evaluation domain.
Each polynomial's evaluations over $H$ form a multiset, and the product encoding turns the equality of the two multisets into an equality of product polynomials.
For $a(X)$ and $b(X)$ over $H$, the argument proves
\begin{equation*}
\prod_{i=0}^{n-1} \bigl(X + a(\omega^i)\bigr)
= \prod_{i=0}^{n-1} \bigl(X + b(\omega^i)\bigr) .
\end{equation*}
We invoke the argument through the interface $$\mathsf{Eq}(a(X), b(X)),$$ where each polynomial input stands for the multiset of its evaluations over the domain.

\fakeparagraph{Lookup Argument}The lookup argument proves that the set of values of one polynomial, the query, is contained in the set of values of a second, the table.
Containment is weaker than multiset equality, as a table entry may be used by several queries or by none, and the two products are therefore matched by reweighting the table side.
For a query polynomial $f(X)$ and a table polynomial $t(X)$ over $H$, each table factor is raised to a multiplicity $m_j \geq 0$, supplied by the prover as a witness subject to $\sum_j m_j = n$:
\begin{equation*}
\prod_{i=0}^{n-1} \bigl(X + f(\omega^i)\bigr)
= \prod_{j=0}^{n-1} \bigl(X + t(\omega^j)\bigr)^{m_j} .
\end{equation*}
The multiplicity $m_j$ counts the queries that use table entry $j$, unused entries take $m_j = 0$, and the identity holds for some choice of multiplicities exactly when every evaluation of $f(X)$ occurs among the evaluations of $t(X)$.
Fixing every multiplicity to one instead recovers the permutation argument~\cite{haboeck2022logup}.
We invoke the argument through the interface $$\mathsf{Mem}\bigl(f(X),\, t(X)\bigr).$$
\subsection{Extensions}
\label{sec:bg-extensions}
The inputs to the two arguments extend beyond single trace polynomials.
Both identities consume their inputs only through their evaluations, and any polynomial expression over the trace polynomials therefore defines a valid input, with its evaluations as the encoded multiset.
We define three such constructions, each extending the interfaces $\mathsf{Eq}$ and $\mathsf{Mem}$ with a notation for its inputs.

\fakeparagraph{Vectors}
A trace entry is often a tuple of values across several polynomials, while the product factors range over single field elements.
A verifier challenge $\alpha \xleftarrow{\$} \mathbb{F}$ compresses each tuple into one element, and two distinct tuples collide at a random $\alpha$ with negligible probability by the Schwartz--Zippel lemma.
Both interfaces accept vectors of polynomials $(f_1(X), \dots, f_d(X))$, entering the identity as the combination
\begin{equation*}
\vec{f}(X) = \sum_{c=1}^{d} \alpha^{c-1}\, f_c(X) .
\end{equation*}

\fakeparagraph{Selection}
A claim rarely concerns a polynomial's full domain, and an argument may apply to one region of the trace or to entries the prover designates.
Such restrictions are realized through \emph{selector} polynomials, which evaluate to $1$ on a subset of the domain and to $0$ elsewhere.
A selector is \emph{fixed} when preprocessed into the constraint system and \emph{witness} when supplied by the prover, and selectors compose by multiplication.
For a subset $S \subseteq H$, we write $f(X)|_S$ for the evaluations of $f(X)$ over $S$, entering the identity as the composite
\begin{equation*}
f(X)\big|_S = q_S(X) \cdot f(X) + \bigl(1 - q_S(X)\bigr) \cdot \delta ,
\end{equation*}
where $q_S(X)$ is the selector of $S$ and $\delta$ is a public default value.
\fakeparagraph{Unions}
A claim may collect values from several polynomials or several subsets at once, and the corresponding input is the union, in which each part contributes its evaluations.
Unioning multiplies the product polynomials, as the product over a union is the product over its parts:
\begin{equation*}
\prod_{v \,\in\, f(X)|_S \,\uplus\, g(X)|_{S'}} \bigl(X + v\bigr)
= \prod_{v \,\in\, f(X)|_S} \bigl(X + v\bigr) \cdot
\prod_{v \,\in\, g(X)|_{S'}} \bigl(X + v\bigr) .
\end{equation*}

%% file: sections/relatedwork.tex
\section{Related Work}
\label{sec:related}

\paragraph{Verifiable Semantic Search.}

Existing work on verifiable semantic search largely targets cluster-based indices.
VeriRAG~\cite{verirag} and V3DB~\cite{v3db} construct zero-knowledge proofs for search over IVF indices~\cite{videogoogle,ivfpq}, which partition the data into clusters offline and answer a query by exhaustively scanning the clusters nearest to it.
This regular structure encodes compactly as fixed polynomial constraints but lowers recall, as the clusters are selected by centroid distance alone, and for a query whose nearest neighbor lies outside them, the search returns a more distant vector. Our evaluation confirms this gap, as the product quantization these systems rely on bounds recall well below deployable targets (\cref{sec:evaluation}).

Concurrent to our work, Jiang et al.\ propose zkRAG~\cite{zkrag}, a zero-knowledge proof that directly encodes the HNSW search algorithm, differing from \oursystem' restructuring in two respects.
First, zkRAG preserves the layered graph structure and therefore relies on revealing the number of steps taken at every layer.
\oursystem' merged traversal runs under a single fixed bound (\cref{sec:unified-greedy}) and reveals no query-dependent information beyond the search result.
Second, zkRAG verifies the two priority queues of beam search through a specialized PIOP for queue updates.
\oursystem replaces the two queues with a single bounded set, which we show leaves the search result unchanged (\cref{sec:single-set-beam}), and efficiently constrains the set's updates through one batched merge per iteration.

\fakeparagraph{Private ANN Search}
Private ANN search addresses a setting orthogonal to ours: these systems hide the query from the server but assume it executes the search faithfully.
Tiptoe supports cluster-based search using linearly homomorphic
encryption~\cite{tiptoe}, HERS performs exhaustive matching over FHE-encrypted representations~\cite{hers}, and Compass performs HNSW search while hiding access
patterns with ORAM~\cite{compass}. 
Compass additionally provides integrity, but under an inverted trust model: the client owns the database, maintains the index graph itself, and outsources only storage, whereas in our setting the provider controls both and the client learns only results.
Onyx extends the ORAM-based design by moving the client logic into a server-side TEE, executing the search inside an enclave~\cite{onyx}, obtaining integrity from attestation at the cost of hardware trust.

\paragraph{Verifiable ML Inference.}
A recent line of work on verifiable inference, often referred to as zkML, uses zero-knowledge proofs to show that a committed model produced a claimed output~\cite{zkml, artemis, zkllm, zkgpt}. These works are complementary to ours and can be combined to build RAG pipelines. Verifiable inference frameworks can be used to prove the model computations while
\oursystem provides verifiability for the retrieval step.
Together, they enable a client to verify an end-to-end retrieval-augmented response. 

%% file: sections/overview.tex
\section{System Overview}\label{sec:overview}

\oursystem is a system that enables a client to verify that a semantic search query was answered correctly against a provider-maintained database. This section introduces the setting, the threat model and security goals, and describes our approach at a high level. 
The full construction follows in \S\ref{sec:design}.

\paragraph{Setting.}
We consider a standard semantic search service, in which a provider maintains a database $D$ of $N$ vectors. 
Given an input query $q \in \mathbb{R}^d$, the service aims to find the $k$ nearest vectors under a distance measure \di{\cdot}{\cdot} such as the $\ell_2$ distance.
To scale to large databases, the provider builds an HNSW graph $G$, which organizes the dataset into $L+1$ layers $G_0, \ldots, G_L$, where $G_0$ holds the full dataset and each $G_\ell$ contains a small fraction of the nodes of $G_{\ell-1}$.
To resolve a query $q$, the provider runs \textsc{HNSW-Search} (\Cref{alg:hnsw_search}): the search starts at a fixed entry point in the top layer, greedily visiting neighboring nodes and proceeding downward when no neighbor at the current layer is closer to $q$. 
In the bottom layer, it performs a beam search of width $\mathit{ef}$ and returns $\mathrm{top}_k(W)$, the $k$ elements of the final candidate set $W$ nearest to the query, as the result.
Throughout, $N(u)$ denotes the neighborhood of node $u$ in the graph.

\paragraph{System Goals.}
A verifiable semantic search system must achieve three properties, (i)~\emph{privacy}, a client learns nothing beyond what the query result reveals, (ii)~\emph{soundness}, a client receives the correct search result with respect to the committed index, and (iii)~\emph{efficiency}, the per-query cost to both provider and client is sublinear in the database size.
We state soundness relative to the index rather than the database, as approximate search is inherently index-dependent and a different index can yield a different result set.

\paragraph{Threat Model.}
We consider two parties, a provider that holds the index and answers queries, and a client that issues queries and verifies the responses.
Either party may be corrupted by a malicious adversary.
A corrupted provider may deviate from the protocol arbitrarily:
it may answer from a stale or altered index, truncate the traversal, bias the
ranking, or forge a proof for a result. 
A corrupted client may try to learn as much information as possible from the provider's response, such as information about the index graph's structure, edges, or embeddings.
The query is sent to the provider in the clear.
We assume that before serving any query, the provider publishes a binding commitment to its index $G$ that clients can learn.
Finally, the search result is bound to the committed index, but claims about the well-formedness of the index are orthogonal\footnote{Existing techniques can be used to show the well-formedness, i.e., that its edges connect genuinely nearby vectors, or that its construction followed any particular distribution.}.

\subsection{Our Approach}

At the core of \oursystem is a new zero-knowledge proof construction for the \textsc{HNSW-Search} algorithm. 
Unlike many graph algorithms, including shortest-path search, HNSW admits no succinct certificate that can be verified in place of recomputing the result.
Its traversal is greedy and data-dependent, and the correctness of a result can therefore be established only by re-executing the search itself.
Our proof consequently verifies a complete execution of the search, including the visited nodes, their distance comparisons, and the intermediate algorithm state.
More formally, our proof shows that a set of embeddings $R$ is the result of correctly executing the \textsc{HNSW-Search} algorithm for a query $q$ on a private index $G$ committed to by $c$, i.e., a proof of knowledge for the relation 
\begin{equation*}
	\mathcal{R} = \left\{
	\begin{array}{c}
    ((c, q, R), G) : \\
    \begin{array}{l}
    R = \textsc{HNSW-Search}_{\mathit{ef}, k}(G, q) \\
    \wedge\; c = \mathsf{Commit}(G)
    \end{array}
    \end{array}
  \right\},
\end{equation*}
where $\mathit{ef}$ and $k$ are public configuration parameters of HNSW.

\fakeparagraph{Setup Phase}
The prover commits to the index $G$ with three commitments: one to an embedding table, which stores the embeddings of the entire database $D$, and two to edge tables, one for the upper $L$ layers of the HNSW graph and one for the bottom layer. 
The $i$-th row of the embedding table contains a node identifier and its embedding vector $e_i \in \mathbb{F}_p^{D}$, i.e.,
$$
\left[id_i, e_i^{(0)},\ldots,e_i^{(D-1)}\right].
$$
We split the edge list into two tables, $\Ted^{\mathrm{greedy}}$ and $\Ted^{\mathrm{beam}}$, as nodes in the upper $L$ layers have $M$ neighbors and nodes in layer $0$ have $2M$.
Row $i$ in the greedy table $\Ted^{\mathrm{greedy}}$ contains a tuple of node identifier $id_i$ and level $\ell_i$, followed by $M$ tuples of neighbors, i.e.,
$$
\left[id_i, \ell_i, id_{nb}^{(0)},\ell_i,\ldots,id_{nb}^{(M-1)},\ell_i, id_{i},\ell_{i}-1\right].
$$
The final tuple $(id_{i},\ell_{i}-1)$ represents a downward edge to the same node one layer below.
This flattening of the upper layers into a single graph $G_\mathrm{upper}$ enables the optimizations of the search's dynamic termination that we present in \S\ref{sec:restructuring}.

\paragraph{Online Phase: Consistency.}
In the online phase, the prover shows that every edge and embedding on the traversed path is consistent with the commitments to $G$.
Lookup arguments are the natural primitive for these consistency checks, as they prove that claimed values appear in a committed table.
Most, however, incur prover cost linear in the table size, so the proof costs as much as a full dataset scan and the sublinearity that motivates the index is lost.
We therefore specifically choose the cq lookup argument~\cite{cq, lookupsok}, whose per-query prover cost is independent of the table size after a one-time preprocessing of the table.
We perform this preprocessing once at index construction, in time $O(N \log N)$ matching the cost of constructing $G$, and the prover's per-query cost then scales with the traversal's length rather than the graph's size.

\paragraph{Online Phase: Traversal.}
We encode HNSW's traversal in two parts, the greedy search in the upper layers and the beam search in the bottom layer.
In the flattened upper graph, each node is referenced by a tuple $(u, \ell)$ of node identifier and level, as the same node appears at several layers.
In place of each layer's data-dependent termination condition, we bound the traversal by a fixed step budget $T_g$ over the unified graph $G_\mathrm{upper}$ and $T_b$ for the beam search in the lowest layer.
For the greedy search, the prover shows at each step that the selected node is a neighbor of the current node and the nearest one to the query, with all steps proven jointly by a single batched argument (\S\ref{sec:greedy-constraints}).
Once the traversal reaches the bottom layer, it no longer suffices to identify a single next node, as the beam search must maintain and update a dynamic set of $\mathit{ef}$ candidates.
Encoding this dynamic behavior is the main challenge of our construction, and we address it in \S\ref{sec:restructuring}.

\begin{algorithm}[t]
\small
\setstretch{0.92} 
\SetKwInOut{Input}{Input}
\SetKwInOut{Output}{Output}
\DontPrintSemicolon
\caption{$\textsc{HNSW-Search}_{\mathit{ef}, k}(G, q)$}
\label{alg:hnsw_search}
\Input{Multi-layer graph $G$ with layers $G_0, \ldots, G_L$ and entry point $\mathit{ep}_{G_L} \in G_L$, query vector $q$, beam width $\mathit{ef}$, number of results $k$}
\Output{$k$ nearest neighbors of $q$ found by the search}
$\mathit{ep} \leftarrow \mathit{ep}_{G_L}$\;\label{ln:epset}
\For{$\ell \leftarrow L$ \KwTo $1$}{
    $W \leftarrow \textsc{Search-Layer}_{1}(G_\ell, q, \mathit{ep})$\;\label{ln:searchlayercall}
    $\mathit{ep} \leftarrow \arg\min_{x \in W} \mathit{dist}(x, q)$\;\label{ln:eplast}
}
\BlankLine
$W \leftarrow \textsc{Search-Layer}_{\mathit{ef}}(G_0, q, \mathit{ep})$\;
\BlankLine
\Return{$\mathrm{top}_{k}(W)$}
\end{algorithm}

\begin{algorithm}[t]
\small
\setstretch{0.92}
\SetKwInOut{Input}{Input}
\SetKwInOut{Output}{Output}
\DontPrintSemicolon
\caption{$\textsc{Search-Layer}_{\mathit{ef}}(G_\ell, q, \mathit{ep})$}
\label{alg:search_layer}
\Input{layer graph $G_\ell$, query vector $q$, entry point set $\mathit{ep}$, beam width $\mathit{ef}$}
\Output{$\mathit{ef}$ nearest neighbors of $q$ found at this layer}
$v \leftarrow \mathit{ep}$;\ $C \leftarrow \mathit{ep}$;\ $W \leftarrow \mathit{ep}$\label{ln:init}\;
\While{$|C| > 0$}{\label{ln:while}
    $c \leftarrow \arg\min_{x \in C} \mathit{dist}(x, q)$;\ $C \leftarrow C \setminus \{c\}$\label{ln:pop}\;
    $f \leftarrow \arg\max_{x \in W} \mathit{dist}(x, q)$\;
    \If{$\mathit{dist}(c, q) > \mathit{dist}(f, q)$}{
        \textbf{break}\label{ln:break}\;
    }
    \For{$e \in N_{G_\ell}(c)$\label{ln:forneighbor}}{
        \If{$e \notin v$\label{ln:visitcheck}}{
            $v \leftarrow v \cup \{e\}$\label{ln:addvisited}\;
            $f \leftarrow \arg\max_{x \in W} \mathit{dist}(x, q)$\;
            \If{$\mathit{dist}(e, q) < \mathit{dist}(f, q)$ \textnormal{or} $|W| < \mathit{ef}$\label{ln:distcheck}}{
                $C \leftarrow C \cup \{e\}$;\ $W \leftarrow W \cup \{e\}$\label{ln:insert}\;
                \If{$|W| > \mathit{ef}$}{
                    $W \leftarrow W \setminus \{\arg\max_{x \in W} \mathit{dist}(x, q)\}$\label{ln:evict}\;
                }
            }
        }
    }
}
\Return{$W$}
\end{algorithm}

%% file: sections/design.tex
\section{Design}
\label{sec:design}
Proving HNSW search in a zkSNARK requires bridging a mismatch between the algorithm's data-dependent execution and the fixed constraint system that a zkSNARK demands.
How long the search runs and how much state it accumulates both vary with the query, and the constraint system must provision for both in advance.
We first identify the two sources of this mismatch, then resolve it in two stages.
The first stage (\S\ref{sec:restructuring}) is algorithmic and confines the data dependence that is not inherent to the search.
The second stage (\S\ref{sec:techniques}) is algebraic and encodes what remains as compact polynomial constraints.
\fakeparagraph{Challenge 1: Data-dependent Termination}
HNSW search is inherently dynamic.
While on average the search terminates within $\log N$ steps, the actual number of search steps at each layer varies and is not known in advance.
The \textsc{Search-Layer} algorithm (\Cref{alg:search_layer}) returns when the nearest unexplored node is further from the query than the furthest node in the result set (line~\ref{ln:break}).
This termination condition depends on the query and the index, and a fixed constraint system must therefore encode the worst-case traversal of every layer.
The proving cost of every query is then that of $L$ worst-case layer traversals.
We collapse the upper $L$ layers into a unified graph representation $G_\mathrm{upper}$, turning their $L$ traversals into one.
We then bound the total number of steps of the greedy and beam search phases, in place of encoding $L$ worst-case layer traversals.
\fakeparagraph{Challenge 2: Beam Search State Tracking}
Where greedy search maintains only a single node as its frontier,
beam search holds up to $\mathit{ef}$ nodes at the same time.
As a result, the choice of which nodes to explore next depends not only on the local neighborhood of the current node, but on all $\mathit{ef}-1$ other candidates as well: a node may be dropped from consideration because closer candidates exist in an entirely different region of the graph.
The algorithm tracks this exploration with three data structures (line~\ref{ln:init}): a result set $W$ of at most $\mathit{ef}$ elements, and a candidate set $C$ and visited set $v$ that both grow with the traversal (lines~\ref{ln:addvisited},~\ref{ln:insert}). 
Each neighbor of an expanded node is first checked against the visited set $v$ (line~\ref{ln:visitcheck}) and then admitted into both $W$ and the candidate queue $C$ if it improves on $W$'s furthest element.
The admission depends on the evolving state, as each admission can evict $W$'s furthest element (line~\ref{ln:evict}) and tighten the threshold for every neighbor after it.
A direct proof would therefore have to unroll the loop into one step per neighbor, each carrying its own copy of $W$, $C$, and $v$ and encoding every branch the update could take.
With this, the cost of expanding a single node would scale as a product of the neighborhood size and the worst-case queue sizes.
Through our restructuring of the algorithm state, we process each neighborhood in a single batched update instead of a chain of sequential insertions.
 
\subsection{Restructuring HNSW} 
\label{sec:restructuring}

In this subsection, we reformulate the HNSW algorithm to support more efficient dynamic termination and reduce the complexity of proving beam search.
Concretely, we fix the number of search steps, resolving the data-dependent termination of Challenge~1, and collapse the interdependent sets of Challenge~2 into a single set of fixed size $\mathit{ef}$.
First, we organize the $L+1$ layers into two graphs, merging the $L$ per-layer greedy traversals into a single traversal by encoding layer transitions as edges, removing the need for a separate encoding per layer.
Second, we replace the beam phase algorithm with an equivalent procedure whose internal state is only a single bounded set, turning each iteration into one batched update rather than a chain of sequential insertions. 
We prove that the restructured search (\Cref{alg:restructured}) produces the same output as the original (\Cref{alg:hnsw_search}).

\subsubsection{Unifying Greedy Search}
\label{sec:unified-greedy}
To avoid padding every layer's traversal to its worst-case length (Challenge~1), we reorganize the multi-layer graph $G$ into two components, the bottom graph $G_0$ and an upper graph $G_\mathrm{upper}$ that merges the $L$ layers of the greedy search.
The starting point of the search in each layer is the end node of the previous layer, and the downward transition can therefore be represented as an edge from $(v, \ell)$ to $(v, \ell{-}1)$, making the layer boundary disappear into the edge set.
This enables us to replace the $L$ separate calls to $\textsc{Search-Layer}_1(G_\ell, q, \mathit{ep})$ (line~\ref{ln:searchlayercall}) with a single call to $\textsc{Search-Layer}'_1(G_\mathrm{upper}, q, \mathit{ep})$.
$\textsc{Search-Layer}'$ is defined in the same way as $\textsc{Search-Layer}$ but compares nodes by the pair $(d(v, q), \ell)$ lexicographically, i.e., the walk moves to a strictly closer neighbor whenever one exists and otherwise breaks distance ties in favor of the lower layer.
We define the unified upper graph $G_{\mathrm{upper}}$ as the union of all per-layer edge sets together with the drop-down edges $\{((v, \ell), (v, \ell{-}1)) : v \in G_\ell,\ \ell \geq 1\}$.
The single invocation $\textsc{Search-Layer}'_{1}(G_\mathrm{upper}, q, \{(\mathit{ep}, L)\})$ terminates at $(v^*_1, 0)$, the node reached by the original greedy phase (lines~\ref{ln:epset}--\ref{ln:eplast} of \Cref{alg:hnsw_search}), after $T_g = \sum_{\ell=1}^{L} T^*_\ell + L$ steps, where $T^*_\ell$ counts the greedy steps at layer $\ell$.
We state and prove this equivalence as \Cref{thm:greedy-equivalence} in \Cref{app:proof-greedy}.
Since $(v, \ell)$ and $(v, \ell{-}1)$ share the same embedding, $d((v, \ell{-}1), q) = d((v, \ell), q)$, and the lexicographic comparison therefore selects the drop-down edge only when no intra-layer neighbor is strictly closer to $q$, i.e., exactly when $\textsc{Search-Layer}_1(G_\ell, q, \cdot)$ would terminate.
The walk on $G_\mathrm{upper}$ thus reproduces each per-layer search and descends at its local minimum.

We pad each of the two remaining traversals to a fixed budget, the walk in $G_\mathrm{upper}$ to $T_g$ steps and the search in $G_0$ to $T_b$ steps.
The restructuring leaves only these two bounds to provision, in place of a worst-case traversal for each of the $L$ layers, and both are set from the observed distribution of total step counts.
A search exceeding its budget terminates early with its current candidate set, which at 95th-percentile budgets costs less than one recall@1 point on the integer datasets (\S\ref{sec:evaluation}).

\subsubsection{Single-set Beam Phase}
\label{sec:single-set-beam}
Beam search is more complex than greedy because the algorithm maintains multiple candidates simultaneously, tracked by two priority queues $W$ and $C$, and each neighbor must be evaluated against the evolving state of both.
Inserting a node into the result set $W$ can change its worst element and, with it, the insertion threshold for the next node.
The $2M$ neighbors of each selected node must therefore be processed sequentially, requiring $O(M \cdot \mathit{ef})$ constraints per iteration.
The candidate set $C$ compounds this issue, as its size has no fixed bound and must be allocated for the worst case.
We eliminate both obstacles by reformulating the beam search around a single bounded set that can be updated efficiently, with a flag marking the elements already processed.
The reformulation rests on two observations:
(i) the state of $W$ after processing a neighborhood does not depend on the order in which neighbors are inserted, and
(ii) although the contents of $C$ do depend on insertion order, this difference never influences the search result.
Using the two observations, we replace the sequential neighbor insertions of each step with a single merge of the $2M$ neighbors into the bounded set, and the queue $C$ with a flag.
In the following, we discuss and prove both observations, then present the restructured beam search.

\paragraph{Order Independence of the Result Set.}
At each iteration of \textsc{Search-Layer} (\Cref{alg:search_layer}), the neighborhood exploration loop (line~\ref{ln:forneighbor}) inserts neighbors into $W$ sequentially, evicting the furthest element when $|W| > \mathit{ef}$ (line~\ref{ln:evict}).
Although the intermediate states of $W$ vary with the insertion order, its final state is always $\mathrm{top}_{\mathit{ef}}(W \cup \text{neighbors})$, the $\mathit{ef}$ elements nearest the query among $W$ and the neighbors combined.
The neighborhood can therefore be processed as a single batched operation, whose constraint cost scales additively ($O(M + \mathit{ef})$) rather than multiplicatively ($O(M \cdot \mathit{ef})$).

\begin{lemma}[Order Independence of $W$]
\label{lem:order-independence}
Let $W$ be a set of at most $\mathit{ef}$ elements, let $A$ be a set of new elements, and let $q$ be a query point.
Inserting the elements of $A$ into $W$ in any order, with the furthest element from $q$ evicted whenever $|W| > \mathit{ef}$, as in lines~\ref{ln:forneighbor}--\ref{ln:evict} of \Cref{alg:search_layer}, always yields $\mathrm{top}_{\mathit{ef}}(W \cup A)$, the $\mathit{ef}$ elements of $W \cup A$ nearest to $q$.
\end{lemma}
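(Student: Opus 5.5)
The plan is to prove the statement by induction on the number of inserted elements, with a total order fixed on the elements. Since $\mathrm{top}_{\mathit{ef}}$ must be well defined, I would first compare elements by the pair $(\mathit{dist}(x,q), x)$ lexicographically, breaking distance ties by identifier. I would read ``furthest'' and the comparison on line~\ref{ln:distcheck} with respect to this order; for distinct distances this coincides with the algorithm as written. I also assume, as the visited check (line~\ref{ln:visitcheck}) guarantees, that the elements of $A$ are pairwise distinct and disjoint from $W$.

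Fix an arbitrary enumeration $a_1,\dots,a_m$ of $A$. Let $W_j$ be the state after processing $a_1,\dots,a_j$, including the admission test on line~\ref{ln:distcheck} and any eviction on line~\ref{ln:evict}. The invariant I would maintain is
\[
W_j = \mathrm{top}_{\mathit{ef}}\bigl(W \cup \{a_1,\dots,a_j\}\bigr).
\]
The base case $j=0$ holds because $|W|\le \mathit{ef}$, so $\mathrm{top}_{\mathit{ef}}(W)=W$. The right-hand side at $j=m$ is $\mathrm{top}_{\mathit{ef}}(W\cup A)$, which does not depend on the enumeration, so the claimed order independence follows immediately.

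The inductive step reduces to a single-element identity: for any finite $S$ and any $a\notin S$,
\[
\text{one insertion of } a \text{ into } \mathrm{top}_{\mathit{ef}}(S) \text{ yields } \mathrm{top}_{\mathit{ef}}(S\cup\{a\}).
\]
Write $T=\mathrm{top}_{\mathit{ef}}(S)$ and split into three cases according to the branch the algorithm takes.
\begin{itemize}
\item \emph{Not full}, $|T|<\mathit{ef}$. Then $T=S$, the element $a$ is admitted, and no eviction occurs. The result is $S\cup\{a\}$, which has at most $\mathit{ef}$ elements and hence equals its own $\mathrm{top}_{\mathit{ef}}$.
\item \emph{Full and $a$ rejected}, so $a$ is above the maximum $f$ of $T$. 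Every element of $S\setminus T$ is above $f$, and $a$ is as well. Thus the $\mathit{ef}$ smallest elements of $S\cup\{a\}$ are still exactly $T$, matching the unchanged state.
\item \emph{Full and $a$ admitted}, so $a<f$. Then $T\cup\{a\}$ has $\mathit{ef}+1$ elements and the eviction removes its maximum, which is $f$. The set $T\setminus\{f\}$ consists of the $\mathit{ef}-1$ smallest elements of $S$, and $a$ is below every other element of $S$ outside it. Hence $(T\setminus\{f\})\cup\{a\}$ is precisely the set of the $\mathit{ef}$ smallest elements of $S\cup\{a\}$.
\end{itemize}
Applying this identity with $S=W\cup\{a_1,\dots,a_{j-1}\}$ and $a=a_j$ closes the induction.

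I expect the main obstacle to be the rejection case and the tie handling rather than the eviction case. Without a strict total order, an element at exactly the threshold distance could make $\mathrm{top}_{\mathit{ef}}$ ambiguous, and the strict inequality on line~\ref{ln:distcheck} would then matter. I would handle this by stating the tie-breaking convention explicitly up front and observing that with distinct distances it is vacuous. The remaining step, that elements of $S\setminus T$ dominate $f$, is a direct consequence of the definition of $\mathrm{top}_{\mathit{ef}}$.
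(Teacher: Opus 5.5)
Your proof is correct, but it is organized differently from the paper's.

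\textbf{The paper's route.} The paper argues globally. It fixes $T = \mathrm{top}_{\mathit{ef}}(W \cup A)$ and notes that each $e \in T$ has at most $\mathit{ef}-1$ elements of $W \cup A$ closer to $q$. So whenever the state momentarily holds $\mathit{ef}+1$ elements, some element is further than $e$, and $e$ is never the one evicted. This gives $T \subseteq W_{\mathrm{final}}$, and the count $|W_{\mathrm{final}}| = \mathit{ef} = |T|$ finishes the argument.

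\textbf{Your route.} You prove the stronger prefix invariant $W_j = \mathrm{top}_{\mathit{ef}}(W \cup \{a_1,\dots,a_j\})$. It follows from a one-step identity: a single admit-or-reject-then-evict step commutes with $\mathrm{top}_{\mathit{ef}}$. Your case split mirrors the algorithm's branches.

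\textbf{What each buys.} The paper's route is shorter and isolates the key fact that members of the final set are never displaced. However, it models every arrival as an unconditional insertion and leaves two things implicit:
\begin{itemize}
\item that the rejection branch at line~\ref{ln:distcheck} amounts to inserting and immediately evicting the newcomer;
\item that the final size is exactly $\mathit{ef}$.
\end{itemize}
Your case analysis makes both explicit. It also characterizes every intermediate state: $a_j$ is admitted iff it lies in $\mathrm{top}_{\mathit{ef}}$ of the current prefix. That is the kind of fact \Cref{lem:pop-invariance} uses informally when it reasons about nodes admitted temporarily and later evicted.

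\textbf{Ties.} Your explicit tie-breaking convention is needed, not pedantic. With the strict comparison as written, the statement fails under exact ties. Take $\mathit{ef}=1$, $W=\{w\}$ at distance $5$, and two new elements both at distance $3$: whichever arrives first survives. The paper's step ``at least one is further than $e$'' silently assumes distinct distances, or some fixed tie-break like yours.
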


\paragraph{Invariance of the search result under $C$.}
The order-independence of $W$ does not immediately imply the same property for the candidate set $C$.
A neighbor enters $C$ only if it is close enough to be in $W$ at the moment it is processed, and the admission threshold, namely the
distance of $W$'s current worst element, changes as the
neighborhood is processed.
Different insertion orders therefore expose each neighbor to
different thresholds, and a node admitted under one order may
be rejected under another.
For example, let $W$ have capacity $\mathit{ef} = 2$ and hold
nodes with distances $\{5, 8\}$, and let the current node's neighbors be represented by distances
$\{ 3, 7, 9 \}$.
If 3 is processed first, it evicts 8 and the threshold drops
to 5, so 7 is rejected and only 3 is added to $C$.
If 7 is processed first, it is added to $C$, only to be evicted from $W$ when 3 arrives.
Both orders yield the same final $W$, holding $\{3, 5\}$, but
different candidate sets: $C = \{3\}$ versus $C = \{3, 7\}$.

However, this difference never results in a different output of \textsc{Search-Layer}.
First, any node for which the two executions disagree is only included in $W$ temporarily and is subsequently removed; therefore, its distance is greater than that of every node in the final set $W$.
Second, $C$ is kept in ascending order of distance, and \textsc{Search-Layer} terminates as soon as the extracted node is further than $W$'s worst element (line~\ref{ln:break}).
A disputed node is therefore selected only after every closer
candidate, and when its turn comes, it
triggers termination rather than expansion.
The extra elements of $C$ are redundant, as both executions
process the same nodes in the same order.
 
\begin{lemma}[Candidate-Selection Invariance]
\label{lem:pop-invariance}
Two executions of \textsc{Search-Layer} (\Cref{alg:search_layer}) on the same $(G, q, \mathit{ep}, \mathit{ef})$, that differ only in the order in which neighbors are inserted at line~\ref{ln:forneighbor}, extract the same node from $C$ at line~\ref{ln:pop} at every iteration.
\end{lemma}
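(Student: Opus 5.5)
We argue by strong induction on the iteration count, maintaining a joint invariant between the two executions $E_1$ and $E_2$. Write $W^{(t)}_i, C^{(t)}_i, v^{(t)}_i$ for the state of execution $E_i$ at the start of iteration $t$ (line~\ref{ln:pop}), and let $\delta_t$ denote the distance of the furthest element of $W^{(t)}$. We assume distances are distinct, or ties are broken by a fixed total order on node identifiers, so that $\arg\min$ and $\arg\max$ are deterministic.

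The invariant at the start of iteration $t$, provided neither execution has terminated, has four parts:
\begin{enumerate}
\item[(a)] $W^{(t)}_1 = W^{(t)}_2$;
\item[(b)] $v^{(t)}_1 = v^{(t)}_2$;
\item[(c)] every element of the symmetric difference $C^{(t)}_1 \triangle C^{(t)}_2$ lies strictly further from $q$ than $\delta_t$;
\item[(d)] the elements of $C^{(t)}_1$ and $C^{(t)}_2$ that are at most $\delta_t$ from $q$ coincide.
\end{enumerate}
Part (d) is in fact implied by (c). The base case is immediate, since both executions are initialized identically at line~\ref{ln:init}.

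For the inductive step, suppose the invariant holds at iteration $t$.

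\emph{Termination or matching extraction.} If the closest element $c_i$ of $C^{(t)}_i$ satisfies $\mathit{dist}(c_i,q) > \delta_t$ in either execution, then by (c) and (d) no candidate at distance at most $\delta_t$ exists in either set. Both executions therefore break at line~\ref{ln:break}, and neither extracts a node it goes on to expand. Otherwise $c_1$ is within $\delta_t$, so it lies in the common part of the two candidate sets, and every element of the difference is further away. Hence $c_1 = c_2$, which gives the claimed extraction equality at this iteration.

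\emph{Equal $W$ and $v$ after expansion.} Both executions then expand the same node $c$ with the same visited set. They therefore process the same set $A = N(c) \setminus v$ of unvisited neighbors, only in different orders, and so $v^{(t+1)}_1 = v^{(t+1)}_2$. Each neighbor $e$ is admitted exactly when $\mathit{dist}(e,q)$ is below the current maximum of $W$ or $|W| < \mathit{ef}$. This is the insertion process of \Cref{lem:order-independence}, with non-admitted elements being exactly those that would be evicted immediately. By that lemma, $W^{(t+1)}_1 = W^{(t+1)}_2 = \mathrm{top}_{\mathit{ef}}(W^{(t)} \cup A)$.

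\emph{Restoring (c).} This is the main obstacle. Consider any $e$ added to $C$ in one execution but not the other. Since $e$ was admitted in one order and rejected in another, it cannot belong to $\mathrm{top}_{\mathit{ef}}(W^{(t)} \cup A)$: every element of that top set is admitted under every order, because at the moment it is processed, fewer than $\mathit{ef}$ strictly closer elements can be present. Therefore $e$ lies strictly further than $\delta_{t+1}$.

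Old disputed elements were further than $\delta_t$. Moreover $\delta_{t+1} \le \delta_t$, since $W$ only loses its worst elements or grows while below capacity. In the growing case, $|W| < \mathit{ef}$ forces every neighbor to be admitted in both executions, so no new dispute arises and the old disputes are handled by monotonicity. Otherwise the old disputed elements remain further than $\delta_{t+1}$, and (c) holds again. Removing $c$ from both candidate sets does not affect (c).

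The delicate point is the capacity case: showing that an element admitted under one order must fall outside the final top set, and handling $|W| < \mathit{ef}$ together with ties. Here we would argue directly from the admission rule at line~\ref{ln:distcheck} rather than from \Cref{lem:order-independence} alone. Induction then yields identical extractions at every iteration up to the common termination.
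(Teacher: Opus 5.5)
Your route is the paper's: induct over iterations, get agreement of $W$ from \Cref{lem:order-independence}, and show the two candidate sets differ only in nodes farther than the current worst element of $W$, so such nodes can at most trigger the break. Your additions (the visited-set invariant, and the argument that a disputed node lies outside $\mathrm{top}_{\mathit{ef}}(W^{(t)}\cup A)$ and hence beyond $\delta_{t+1}$) are correct.

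\textbf{The gap.} The step restoring (c) for \emph{old} disputes rests on the claim $\delta_{t+1}\le\delta_t$, and that claim is false below capacity. An admitted neighbor can be farther than everything in $W^{(t)}$. For example, take $W^{(t)}=\{a\}$ at distance $1$, $\mathit{ef}=2$, and a new neighbor at distance $3$; then $\delta_{t+1}=3>\delta_t$. Your remark that no new dispute arises in the growing case also holds only if $W$ stays below capacity for the whole neighborhood. If it fills midway, disputes do arise: with neighbors at distances $3,4,2$, the orders $(3,4,2)$ and $(4,3,2)$ disagree on the node at distance $4$. Your top-$\mathit{ef}$ argument already handles such new disputes, so only the old ones are at stake.

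\textbf{The fix.} Split on $|W^{(t)}|$:
\begin{itemize}
\item If $|W^{(t)}|=\mathit{ef}$, then $W^{(t+1)}=\mathrm{top}_{\mathit{ef}}(W^{(t)}\cup A)$ is taken over a superset of the full set $W^{(t)}$, so $\delta_{t+1}\le\delta_t$ and old disputes stay beyond.
\item If $|W^{(t)}|<\mathit{ef}$, there are no old disputes. A dispute requires a rejection, a rejection requires $|W|=\mathit{ef}$, and $|W|$ never decreases, so every earlier neighbor was admitted in both executions.
\end{itemize}
This capacity-persistence fact is exactly what the paper's phrase ``admitted into $W$ temporarily and then evicted'' silently relies on.

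\textbf{Ties.} Your fallback of breaking ties by a fixed order on identifiers for $\arg\min$/$\arg\max$ does not suffice. The admission test at line~\ref{ln:distcheck} compares raw distances strictly. With $W^{(t)}=\{a\}$ at distance $1$, $\mathit{ef}=2$, and unvisited neighbors $x,y$ both at distance $3$, whichever is processed first is admitted and the other rejected. The executions then reach $\{a,x\}$ versus $\{a,y\}$ and next expand $x$ versus $y$. So the lemma, and \Cref{lem:order-independence}, need genuinely distinct distances, or a tie-break folded into every comparison, including lines~\ref{ln:distcheck} and~\ref{ln:break}. The paper assumes this tacitly.

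\textbf{Terminating iteration.} There the executions may pop different far nodes, or one may exit on $|C|=0$ without popping. What you actually prove is equality of the sequence of \emph{expanded} nodes. That is what the lemma is used for, and it is the same reading the paper's proof needs.
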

\noindent
We prove \Cref{lem:order-independence,lem:pop-invariance} in \Cref{app:proof-order,app:proof-invariance}.

The two lemmas show that the two-queue structure is unnecessary: neither the sequential insertion order nor the unbounded queue $C$ affects the search result.
The result set $W$'s final state is order-independent, and $C$ is only required to track which candidates in $W$ have not yet been processed.
As a result, we can replace $C$ with a flag on the elements of $W$ that indicates whether they have been processed (Phase~2 of \Cref{alg:restructured}).
At each iteration, the nearest unprocessed element is selected, and its entire neighborhood is merged into the set in a single batch by taking the union of the current candidates and the selected node's neighbors, sorting by distance, and keeping the nearest $\mathit{ef}$.
We provide a formal description of the restructured version in~\Cref{alg:restructured}.
The sequential per-neighbor insertions are replaced by one batched merge per step, and the unbounded queue $C$ is eliminated entirely.

\begin{theorem}[Beam Search Equivalence]
\label{thm:beam-equivalence}
Let $\mathit{ep} \in G_0$, $q \in \mathbb{R}^d$, and $\mathit{ef}$ be an entry
point, query, and beam width, and let $T^*$ be the number of iterations of
$\textsc{Search-Layer}_{\mathit{ef}}(G_0, q, \mathit{ep})$
(\Cref{alg:search_layer}).
For any budget $T_b \geq T^*$, Phase~2 of \Cref{alg:restructured} returns the
same final set.
\end{theorem}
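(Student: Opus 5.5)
We will prove the theorem by a lockstep simulation between \Cref{alg:search_layer} and Phase~2 of \Cref{alg:restructured}. Define the invariant $\mathcal{I}_t$ after $t$ iterations:
\begin{enumerate}
\item the restructured set equals the original result set $W_t$;
\item the unprocessed-flagged elements of $W_t$ are exactly $C_t \cap W_t$;
\item every element of $C_t \setminus W_t$ is strictly further from $q$ than every element of $W_t$.
\end{enumerate}
We also need a convention for the visited set $v$. A node discarded by the merge could be re-admitted later only if it beat the current worst element of $W$. Since $W$'s worst distance is non-increasing across iterations, this cannot happen; we argue it explicitly. Ties are handled by the fixed lexicographic tie-break on identifiers, so distances are distinct.

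At $t=0$ both algorithms start from $\{\mathit{ep}\}$, with $\mathit{ep}$ unprocessed, so $\mathcal{I}_0$ holds.

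For the inductive step, assume $\mathcal{I}_t$ and that the original does not break at iteration $t+1$. The original pops $c=\arg\min_{C_t}$. By clause~(3), and since $c$ is not past the break threshold, $c$ lies in $W_t$. Hence $c$ is the nearest flagged element, which is exactly what the restructured algorithm selects.

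The original then inserts the unvisited neighbors $A$ of $c$ sequentially. By \Cref{lem:order-independence}, the result is $W_{t+1}=\mathrm{top}_{\mathit{ef}}(W_t\cup A)$. The restructured merge computes the same set, provided visited-but-absent neighbors cannot re-enter. Such a neighbor $e$ was either rejected or evicted earlier, so it is further than the worst element of some earlier $W_s$. That worst distance is at least the current worst, so $e$ is excluded from the top $\mathit{ef}$ either way.

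For the flags, a newly admitted neighbor that survives in $W_{t+1}$ was added to $C$ in the original and is flagged unprocessed in the restructured version. Clause~(3) is preserved because anything that leaves $W$, or enters $C$ without staying, is further than all of $W_{t+1}$. This is the argument of \Cref{lem:pop-invariance}.

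For termination, the original breaks exactly when $\min C_t$ is further than $\max W_t$. By clauses~(2) and~(3), this happens precisely when no element of $W_t$ is flagged unprocessed. It also covers the case $C_t=\emptyset$.

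At that point the restructured algorithm has no unprocessed element. Its remaining iterations up to $T_b$ are therefore no-ops: it selects nothing, or equivalently pads. Since $T_b\ge T^*$, it never truncates before reaching this fixed point, and it outputs $W_{T^*}$.

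The main obstacle is clause~(3) together with the visited-set argument. We must show that the extra ``ghost'' elements of $C$ and the permanently discarded visited nodes never influence selection or the merge. The key step is monotonicity of $\max_{x\in W_t} d(x,q)$, which holds once $|W|$ is full. Before saturation, when $|W|<\mathit{ef}$, everything is admitted and nothing is evicted, so the invariants hold trivially. We will treat that regime as a separate case.
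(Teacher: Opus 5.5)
Your proposal is correct and follows essentially the same route as the paper's proof sketch. Both run a lockstep simulation in which the selected node and $W$ agree at every iteration, using \Cref{lem:order-independence} for the batched merge and the ``ghost'' argument of \Cref{lem:pop-invariance} for selection, and both observe that Phase~2 idles once no unprocessed candidate remains, so any $T_b \geq T^*$ suffices. Your invariant tying the unprocessed flags to $C_t \cap W_t$ supplies the justification for the selection step, which the paper's sketch only asserts. The separate re-admission argument for visited nodes is harmless but unnecessary: both algorithms add all of $N(c)$ to the visited set, and Phase~2 excludes visited neighbors from $S$ by construction, so both merges are literally $\mathrm{top}_{\mathit{ef}}(W_t \cup A)$ with the same $A$.
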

\noindent We prove \Cref{thm:beam-equivalence} in \Cref{app:proof-beam}.

\begin{algorithm}
\SetKwInOut{Input}{Input}
\SetKwInOut{Output}{Output}
\SetKwComment{Comment}{// }{}
\DontPrintSemicolon
\caption{$\textsc{HNSW-Search-Restruct}_{\mathit{ef}, k, T_g, T_b}(G, q)$}
\label{alg:restructured}
\Input{graph $G$ with unified upper graph $G_{\mathrm{upper}}$,
  layer-$0$ graph $G_0$, and entry point $\mathit{ep}_{G_L}$, query vector $q$, beam width $\mathit{ef}$, number of results $k$, step budgets $T_g$ and $T_b$}
\Output{$k$ nearest neighbors of $q$ found by the search}
\BlankLine
\Comment{Phase 1: greedy search on $G_{\mathrm{upper}}$}
$c \leftarrow (\mathit{ep}_{G_L}, L)$\;
\For{$t \leftarrow 1$ \KwTo $T_g$}{
    $c \leftarrow \arg\min_{n \in N_{G_{\mathrm{upper}}}(c)} \mathit{dist}(n, q)$\;
}
$(\mathit{ep}_0, \cdot) \leftarrow c$\;
\BlankLine
\Comment{Phase 2: batched beam search on $G_0$}
$W \leftarrow \{(\mathit{ep}_0,\, \mathit{dist}(\mathit{ep}_0, q),\, 0)\}$\;
$\mathit{visited} \leftarrow \{\mathit{ep}_0\}$\;
\For{$t \leftarrow 1$ \KwTo $T_b$}{
    \If{$\{(n, d, p) \in W : p = 0\} = \emptyset$}{\textbf{continue}\;}
    $(n_{\mathrm{sel}}, d_{\mathrm{sel}}, \cdot) \leftarrow \arg\min_{(n, d, p) \in W,\; p = 0}\, d$\;
    $W \leftarrow W \setminus \{(n_{\mathrm{sel}}, d_{\mathrm{sel}}, 0)\} \cup \{(n_{\mathrm{sel}}, d_{\mathrm{sel}}, 1)\}$\;
    $S \leftarrow W \cup \{(e,\, \mathit{dist}(e, q),\, 0) : e \in N_{G_0}(n_{\mathrm{sel}}) \setminus \mathit{visited}\}$\;
    $\mathit{visited} \leftarrow \mathit{visited} \cup N_{G_0}(n_{\mathrm{sel}})$\;
    $W \leftarrow \mathrm{top}_{\mathit{ef}}(S)$\;
}
\BlankLine
\Return{$\mathrm{top}_{k}(W)$}
\end{algorithm}

\subsection{Constraint Design}
\label{sec:techniques}

The restructured algorithm of~\Cref{sec:restructuring} makes each traversal step a single batched operation, which is significantly more efficient to prove.
A remaining difficulty is the search's branching, which depends on which nodes have been considered and which candidates have been processed.
The algorithm tracks this state in sets of query-dependent size and contents.
In this section, we express the search as polynomial constraints, with this state supplied by the prover as witness flags, so each branch reduces to a value substitution, every timestep runs identical constraints, and neither set is ever materialized.
We present the constraints of the greedy search first, as the simpler of the search's two phases.
We then discuss beam search as an extension of the greedy constraints.

\subsubsection{Greedy Search}
\label{sec:greedy-constraints}
\fakeparagraph{Trace Encoding}
The greedy search traverses the unified upper graph of \S\ref{sec:restructuring}, at each step moving from the current node to its neighbor nearest to the query, for a fixed budget of $T_g$ steps.
Concretely, the trace is encoded in witness polynomials over a shared evaluation domain, and the walk occupies $T_g$ regions of $B = M + 1$ consecutive points, one region per step.
The first point of region $t$ is the \emph{selected point}, at which the polynomials encode the node selected at step $t$, and the remaining $M$ points are the \emph{neighbor points}, encoding the selected node's neighborhood in the graph.
The selected points across the segment form the position set $\mathsf{Sel}_g$ and the neighbor points the set $\mathsf{Nb}_g$, each with a preprocessed selector polynomial, $q_{\mathsf{Sel}_g}$ and $q_{\mathsf{Nb}_g}$.
For a position set $S$, we write $S^{(t)}$ for its positions in region $t$, and $S^{-}$ for $S$ without the last point of each region.
The polynomials $\polyid(X)$, $\polyl(X)$, and $\polyd(X)$ encode the identifier, layer, and claimed squared distance to the query of the node at every point, and $\polye_{1}(X), \dots, \polye_{D}(X)$ its embedding.
Each step of the walk explores the selected node's neighborhood, computes every node's distance to the query, and selects the nearest neighbor.
We prove the sequence with three constraint sets, \emph{edge consistency} for the explored neighborhoods, \emph{distance verification} for the computed distances, and \emph{candidate selection} for the nearest neighbor.

\fakeparagraph{Edge Consistency}
The neighbor points of each region encode the selected node's neighborhood, supplied by the prover as evaluations of $\polyid(X)$ and $\polyl(X)$.
Each claimed neighborhood must therefore be proven equal to the selected node's edge list in the committed graph.
The edge table $\Ted$ stores a node's full edge list as a single entry, while the trace encodes the list across the region's $M$ neighbor points.
We assemble each region's edge list into one tuple through replicated polynomials $\polyidrep{j}(X)$ and $\polylrep{j}(X)$ for $j \in [M]$, copy-constrained at every point of a region to the values of $\polyid(X)$ and $\polyl(X)$ at the region's $j$-th neighbor point, and write $\vec{a}(X)$ for the assembled vector $(\polyid(X),\, \polyl(X),\, \polyidrep{1}(X),\, \polylrep{1}(X), \dots, \polyidrep{M}(X),\, \polylrep{M}(X))$.
A single membership argument enforces that every region's assembled list is an entry of the edge table:
\[
  \Mem\bigl( \restr{\vec{a}(X)}{\Selg},\; \Ted \bigr) .
\]
Throughout, $\restr{f(X)}{S}$ denotes the restriction of $f(X)$ to the position set $S$, as defined in \S\ref{sec:bg-extensions}.

\fakeparagraph{Distance Verification}
Every node's distance to the query is supplied by the prover as an evaluation of $\polyd(X)$, and each claimed value must be proven equal to the true distance.
A node's distance is determined by its embedding, and correctness is enforced in two parts, an embedding consistency lookup tying each node's embedding to the committed dataset and a polynomial identity tying each claimed distance to the embedding.
The lookup proves every point's identifier and embedding to be an entry of the embedding table $\Temb$:
\[
  \Mem\bigl( \restr{(\polyid(X),\, \polye_{1}(X), \dots, \polye_{D}(X))}{\Selg \cup \Nbg},\; \Temb \bigr) .
\]

\noindent The identity
\begin{equation*}
\Bigl( \polyd(X) - \sum_{i=1}^{D} \bigl( q_i - \polye_{i}(X) \bigr)^2 \Bigr) \Big|_{\Nbg} = 0
\end{equation*}
constrains each claimed distance to the squared $\ell_2$ distance between the public query $(q_1, \dots, q_D)$ and the proven embedding.

\fakeparagraph{Candidate Selection}
Selecting the nearest neighbor amounts to two claims, that the selected node is drawn from the preceding step's neighborhood and that no neighbor is nearer to the query.
We enforce the first claim through a membership constraint, proving that the selected node of each step is a member of the preceding step's neighborhood.
Concretely, both are encoded in $\polyid(X)$: the selected node at region $t + 1$'s selected point and the neighborhood at region $t$'s neighbor points.
We instantiate a membership argument of $\polyid(X)$ into itself, with the selected positions as the query and the neighbor positions as the table:
\begin{equation}
\label{eq:membership}
  \Mem\bigl( \restr{\polyid(X)}{\Selg^{(t+1)}},\; \restr{\polyid(X)}{\Nbg^{(t)}} \bigr) .
\end{equation}

The argument enforces that the identifier at each selected point equals one of those at the preceding region's neighbor points.
To constrain the selected node to be the nearest to the query among all neighbors, it suffices to show that the difference between each neighbor's distance and the selected node's distance is nonnegative.
We introduce a helper polynomial $\polydsel(X)$, equal at every point of region $t$ to the selected distance of region $t + 1$ through a copy constraint.
A membership argument into the range table $\Trange$, a preprocessed table of the nonnegative integers $\{0, \dots, 2^{r} - 1\}$, enforces that every difference is nonnegative:
\begin{equation}
\label{eq:minimality}
  \Mem\bigl( \restr{(\polyd(X) - \polydsel(X))}{\Nbg},\; \Trange \bigr) .
\end{equation}
\fakeparagraph{Timestep-Tagged Argument}
The membership arguments of candidate selection are instantiated once per timestep, as each is restricted to a different pair of position sets, $\Selg^{(t+1)}$ against $\Nbg^{(t)}$.
The $T_g$ separate instances result in significant prover overhead, each requiring auxiliary polynomials and identities of its own.
The pattern appears throughout the rest of the constraint design, as membership and equality claims are typically local to a timestep, each naively requiring a separate argument restricted to its region.
We reduce this overhead by collapsing the per-timestep instances of each claim into a single argument over the full trace, which we refer to as a \emph{timestep-tagged argument}.
Concretely, we extend every tuple with a shared preprocessed \emph{tag polynomial} $\tau(X)$, equal to $t$ over region $t$, and prove the single argument over the tag-extended tuples.
Tuples match only when their tags agree, and a query therefore matches only the table entries of its own timestep.
For the neighborhood membership, the tagged argument proves all $T_g$ claims at once:
\[
  \Mem\bigl( \restr{(\polytag(X) - 1,\, \polyid(X))}{\Selg},\; \restr{(\polytag(X),\, \polyid(X))}{\Nbg} \bigr) .
\]
The query side's tag is shifted by one, tagging the selected point of region $t + 1$ with $t$ to group each selected node with the neighborhood it was selected from.

\subsubsection{Beam Search}
\label{sec:beam-constraints}
\fakeparagraph{Trace Encoding}
The beam phase operates on the single candidate set of \S\ref{sec:restructuring}, at each step processing the nearest unprocessed candidate and merging its neighborhood into the set, for a fixed budget of $T_b$ steps (Phase~2 of \Cref{alg:restructured}).
Concretely, the search is laid out across $T_b$ regions of $B_b = \max(\mathit{ef}, 2M) + 1$ consecutive evaluation points, one region per step.
A region must hold the step's candidate set of $\mathit{ef}$ entries and the selected candidate's neighborhood of $2M$ entries, and $B_b$ therefore fits the longer of the two alongside the selected point.
As in the greedy segment, the first point of region $t$ is the selected point, encoding the candidate processed at step $t$, with its identifier and distance carried by $\polyid(X)$ and $\polyd(X)$.
The two lists share the region's points through separate pairs of polynomials.
The polynomials $\polycan(X)$ and $\polydcan(X)$ encode one candidate's identifier and distance per point over the first $\mathit{ef}$ points, and $\polynb(X)$ and $\polydnb(X)$ one neighbor's identifier and distance per point over the first $2M$ points.
The candidate positions across the segment form the position set $\mathsf{Can}_b$, the neighbor positions $\mathsf{Nb}_b$, and the selected points $\mathsf{Sel}_b$, each with a preprocessed selector polynomial.
We discuss the constraints that enforce the correctness of the two operations, \emph{candidate selection} and \emph{candidate set update}.
\fakeparagraph{Candidate Selection}
We reuse the nearest-neighbor selection constraints of the greedy search to select the next candidate.
In particular, the membership argument (\Cref{eq:membership}) applies with the selected point as the query and the region's candidate entries as the table, proving the selected candidate a member of the step's candidate set, and the minimality check (\Cref{eq:minimality}) applies with $\polycan(X)$ and $\polydcan(X)$ in place of $\polyid(X)$ and $\polyd(X)$.
The difference is that the choice now depends on the \textsf{processed} status, as only unprocessed candidates must be selected.
We therefore extend the constraints to exclude processed candidates from the selection.
The prover supplies the processed flag as a polynomial $\polyproc(X)$, equal to $0$ at candidates not yet processed and $1$ at processed candidates.
We use the flag to substitute $\dinf$ for every processed candidate's distance, defining the substituted distance $\polydcansub(X)$:
\[
  \bigl( \polydcansub(X) - \polydcan(X) \cdot (1 - \polyproc(X)) - \dinf \cdot \polyproc(X) \bigr) \big|_{\Canb} = 0 ,
\]
where $\dinf$ is a public constant that exceeds every achievable squared $\ell_2$ distance.
The minimality check then applies with $\polydcansub(X)$ in place of $\polydcan(X)$:
\[
  \Mem\bigl( \restr{(\polydcansub(X) - \polydsel(X))}{\Canb},\; \Trange \bigr) .
\]
Every processed candidate's distance thus exceeds every unprocessed one's, and processed candidates never influence the selection.
The flag itself remains an unconstrained witness, and the last remaining step is enforcing its consistency with the trace, which we describe in \S\ref{sec:flag-verification}.
\fakeparagraph{Candidate Set Update}%
Following the restructuring of \S\ref{sec:restructuring}, the candidate set update merges the selected candidate's entire neighborhood into the candidate set in one batched operation.
Concretely, at each step $t$, the two lists are merged and sorted by distance, the $\mathit{ef}$ nearest entries become the candidate set of step $t + 1$, and the remaining entries are discarded.
We enforce the correctness of the merge, sort, and truncation through three respective arguments: an \emph{equality argument} enforcing that the input candidate and neighbor lists contain the same entries as the output candidate and discarded lists, a \emph{sorting argument} enforcing that every region's candidate entries are in ascending order of distance, and a \emph{truncation argument} enforcing that every discarded entry is at least as far from the query as the furthest retained one.

On the input side, the equality argument takes region $t$'s candidate and neighbor tuples.
On the output side, it takes region $t + 1$'s candidate entries together with region $t$'s discarded entries.
The retained entries live in the candidate polynomials themselves, as the candidate entries of region $t + 1$, while the discarded entries are encoded by two further polynomials, $\polyidrest(X)$ and $\polydrest(X)$, over the neighbor positions $\mathsf{Nb}_b$.
Applying timestep tagging (\S\ref{sec:greedy-constraints}), we tag the output side's candidate entries with $\polytag(X) - 1$, forcing the input tuples of step $t$ to match the candidate entries of region $t + 1$:
\[
\begin{aligned}
  \Eq\bigl(
    &\restr{(\polytag(X),\, \polycan(X),\, \polydcan(X),\, \polyproc(X) + \polysel(X))}{\Canb} \\
    &\quad \uplus\; \restr{(\polytag(X),\, \polynb(X),\, \polydnb(X),\, 0)}{\Nbb} , \\
    &\restr{(\polytag(X) - 1,\, \polycan(X),\, \polydcan(X),\, \polyproc(X))}{\Canb} \\
    &\quad \uplus\; \restr{(\polytag(X),\, \polyidrest(X),\, \polydrest(X),\, \polyprocrest(X))}{\Nbb}
  \bigr) .
\end{aligned}
\]
The edge consistency and distance verification constraints of \S\ref{sec:greedy-constraints} apply over the neighbor positions, while the candidate positions require no additional constraints of their own.
As the equality argument matches each identifier and its distance as one tuple, every candidate entry originates as a verified neighbor entry of an earlier timestep.

The sorting argument enforces that every region's candidate entries are in ascending order of distance.
We enforce nonnegativity of each difference $\polydcan(\omega X) - \polydcan(X)$ through a range check over $\Canbm$:
\[
  \Mem\bigl( \restr{(\polydcan(\omega X) - \polydcan(X))}{\Canbm},\; \Trange \bigr) .
\]

The truncation argument enforces that every discarded entry is at least as far from the query as the furthest retained one.
We introduce a helper polynomial $\polydworst(X)$, equal at every point of region $t$ to the last retained distance of region $t + 1$ through a copy constraint.
A membership argument into the range table enforces that every difference is nonnegative:
\[
  \Mem\bigl( \restr{(\polydrest(X) - \polydworst(X))}{\Nbb},\; \Trange \bigr) .
\]
Under the sorted order, $\polydworst(X)$ holds the furthest retained distance, and every discarded entry therefore lies at least as far from the query as every candidate of step $t + 1$.

Previously considered neighbors must not re-enter the candidate set, as the set could otherwise hold several copies of the same node.
We reuse the witness-flag mechanism of candidate selection, with a \textsf{considered} flag per neighbor supplied as the polynomial $\polycons(X)$, equal to $1$ exactly at the neighbors considered at an earlier step.
The identity
\[
\Bigl( \polydnbsub(X) - \bigl( 1 - \polycons(X) \bigr) \cdot \polydnb(X) - \polycons(X) \cdot \dinf \Bigr) \Big|_{\Nbb} = 0
\]
constrains the substituted distance $\polydnbsub(X)$ to the true distance at unconsidered neighbors and to $\dinf$ at considered ones, and $\polydnbsub(X)$ replaces $\polydnb(X)$ on the input side of the equality argument.
Considered neighbors thus exceed every retained distance and are routed into the discarded entries by the truncation, never re-entering the candidate set.
The considered flag is likewise an unconstrained witness, and we prove its consistency with the trace in the following.
\subsubsection{Flag Verification}
\label{sec:flag-verification}
A further source of complexity in proving the correctness of the beam search is that its choices at each timestep depend on which nodes were \textsf{considered} and which candidates \textsf{processed} at earlier timesteps.
Concretely, at each timestep the nearest unprocessed node in the candidate set is selected, and the unconsidered neighbors of the selected node are added to the set.
Following the standard treatment of data-dependent control flow in constraint systems~\cite{circ, cobbl}, the prover supplies two flags as part of the witness, a \textsf{processed} flag per candidate for the selection and a \textsf{considered} flag per neighbor for the insertion.
The flags must be proven consistent with the trace, since otherwise the prover can arbitrarily influence the search result, for example by falsely flagging a node as \textsf{considered} to exclude it from the result.
The difficulty in enforcing the correctness of the flags is that their true values depend on the execution's history, which naively requires verifying every timestep against all preceding ones, adding substantial prover overhead.
In this section, we show how to constrain the flags using a global property of the trace, rather than applying constraints at each individual timestep.

\fakeparagraph{Processed Flags}
For the \textsf{processed} flags, consistency with the trace reduces to consistency between adjacent timesteps.
A candidate's processed status changes only at its selection, and a correctly initialized flag consistent across adjacent timesteps is therefore consistent with the trace.
Concretely, an \emph{update} constraint enforces that the flag of the candidate selected for processing equals $1$, a \emph{preservation} constraint enforces that every other candidate's flag is unchanged between adjacent steps, and an \emph{initialization} constraint enforces that the flag of newly added candidates equals $0$.

We realize all three constraints by extending the equality argument of the candidate set update.
To identify the selected candidate within the argument, we introduce a selection polynomial $\polysel(X)$, whose evaluations are one-hot within each region, $1$ at the candidate selected for processing and $0$ at all others.
The argument relates tuples of tag, identifier, and distance, and we extend every tuple with the \textsf{processed} flag as a fourth component:
\[
\begin{aligned}
  \Eq\bigl(
    &\restr{(\polytag(X),\, \polycan(X),\, \polydcan(X),\, \polyproc(X) + \polysel(X))}{\Canb} \\
    &\quad \uplus\; \restr{(\polytag(X),\, \polynb(X),\, \polydnb(X),\, 0)}{\Nbb} , \\
    &\restr{(\polytag(X) - 1,\, \polycan(X),\, \polydcan(X),\, \polyproc(X))}{\Canb} \\
    &\quad \uplus\; \restr{(\polytag(X),\, \polyidrest(X),\, \polydrest(X),\, \polyprocrest(X))}{\Nbb}
  \bigr) .
\end{aligned}
\]
We set the candidates' flag component to $\polyproc(X) + \polysel(X)$ on the input side, updating the selected candidate's flag to $1$ while every other flag is unchanged, enforcing the \emph{update} and \emph{preservation} constraints.\footnote{The selected candidate's flag is necessarily $0$ before the update, as the candidate-selection constraints of \S\ref{sec:beam-constraints} enforce that the selected candidate is unprocessed.}
We fix the neighbor tuples' flag component to $0$, enforcing the \emph{initialization} constraint.

It remains to constrain $\polysel(X)$ to the correct one-hot encoding of the selected candidate.
To do so, it suffices to enforce (1) that $\polysel(X)$ equals $1$ at the selected candidate's point and (2) that $\polysel(X)$ equals $0$ at every other candidate point.
We enforce (1) by extending the selected-candidate membership argument of candidate selection with a third tuple component:
\[
  \Mem\bigl(
    \restr{(\polytag(X),\, \polyid(X),\, 1)}{\Selb},\;
    \restr{(\polytag(X),\, \polycan(X),\, \polysel(X))}{\Canb}
  \bigr) .
\]
Tuples match componentwise, and the constant $1$ on the query side forces the matched candidate entry, the selected candidate's, to equal $1$ in its $\polysel(X)$ component.
To enforce (2), we introduce the identity
\[
  \bigl( \polysel(X) \cdot ( \polyidsel(X) - \polycan(X) ) \bigr) \big|_{\Canb} = 0 ,
\]
where $\polyidsel(X)$ is copy-constrained to the selected candidate's identifier at every candidate point of the region.
At every candidate point, one of the two factors equals $0$, and $\polysel(X)$ equals $0$ at every candidate other than the selected one.

\fakeparagraph{Considered Flags}
While the \textsf{processed} state of a timestep is local to the timestep's candidate set, the \textsf{considered} state spans all nodes encountered at earlier timesteps.
Extending the transition-based verification of the \textsf{processed} flags to the \textsf{considered} flags would therefore require materializing the full set of encountered nodes at every timestep, increasing the evaluation domain's size from linear in the number of timesteps, $O(M \cdot T_b)$, to quadratic, $O(M \cdot T_b^2)$.
The set grows by a neighborhood of $2M$ per timestep, contributing $2M \cdot t$ evaluations at timestep $t$ and $2M \cdot T_b(T_b + 1)/2$ in total.

Instead of proving state transitions, we design constraints that enforce the flags' consistency directly against the trace and avoid resizing the evaluation domain.
It suffices to enforce, at every timestep $t$, that every node flagged $\textsf{considered} = 1$ and no node flagged $\textsf{considered} = 0$ appears in a neighborhood of an earlier timestep $t' < t$.
The $1$-direction is directly enforced through a membership argument against the neighborhoods of all timesteps.
A naive enforcement, however, instantiates one argument per timestep, each against the prefix of the trace preceding the timestep.
The $T_b$ arguments are each defined over the evaluation domain of size $O(M \cdot T_b)$, resulting in prover overhead quadratic in $T_b$.

We collapse the $T_b$ per-prefix arguments into a single argument against the full trace through timestep-tagging (\S\ref{sec:greedy-constraints}).
As every trace entry is tagged with its timestep, the prefix preceding timestep $t$ consists exactly of the entries tagged $t' < t$, and membership in the prefix is membership in the full trace under the tag condition.
The neighbor positions flagged \textsf{considered} form the witness set $\mathsf{Cons}_b$, with the selector $\polycons(X) \cdot \selq{\Nbb}(X)$.
The prover supplies as a witness the timestep $t'(X)$ at which each node was considered, and a single membership argument over $\mathsf{Cons}_b$ against the full trace, with a range check enforcing $t' < t$, enforces the $1$-direction:
\begin{align*}
  \Mem\bigl( \restr{(\polynb(X),\, t'(X))}{\mathsf{Cons}_b},\; \restr{(\polynb(X),\, \polytag(X))}{\Nbb} \bigr) ,\\
  \Mem\bigl( \restr{(\polytag(X) - t'(X) - 1)}{\mathsf{Cons}_b},\; \Trange \bigr) .
\end{align*}

While the $1$-direction is a set membership claim, the $0$-direction is a non-membership claim, which cannot be efficiently established through lookups alone.
Set membership is an existential claim that a membership argument enforces through a witness attesting the queried value's occurrences in the table (e.g.\ the multiplicities $m_j$ in LogUp).
Enforcing the $0$-direction would instead require negating the existential (NP) claim, turning it into a universal (coNP) one that must hold for all witnesses.
Existing zkSNARKs are defined for existential (NP) relations, and non-membership is therefore typically reduced to an existential claim through one of two approaches, each scaling the evaluation domain's size quadratically in the number of timesteps.
The first approach replaces non-membership in a set with membership in the set's complement with respect to a fixed universe of elements.
In our setting, the universe consists of all $2M \cdot T_b$ nodes encountered in the trace, and materializing it at each of the $T_b$ timesteps requires $2M \cdot T_b^2$ evaluations.
The second approach is to keep the set sorted and establish non-membership of a value by supplying the gap between consecutive entries in which the value lies.
In our setting, the considered set grows by $2M$ nodes per timestep, and restating it in sorted order at each timestep results in $2M \cdot T_b(T_b + 1)/2$ evaluations in total.

We instead identify a global property of the trace that suffices for the $0$-direction and can be enforced over the existing evaluation domain.
In an execution of the algorithm, a node is not considered until its first appearance on the trace.
The node identifiers flagged $\textsf{considered} = 0$ are therefore distinct on the trace.
We name the property the \emph{uniqueness} of unconsidered nodes and enforce it directly, in place of the per-timestep non-membership checks.
Enforcing the \emph{uniqueness} of unconsidered nodes reduces to a standard sorting argument, at a constant number of additional polynomials and constraints.

The sorting argument encodes the $\textsf{considered} = 0$ node identifiers in sorted order, encoding the flag--identifier pairs $(\polycons(X), \polynb(X))$ as two witness polynomials $(s(X), u(X))$, sorted by flag first and identifier within each flag value.
At positions outside $\Nbb$, the prover sets the flag to $1$, and the corresponding pairs sort into the tail of the column.
The prover supplies $s(X)$ and $u(X)$ as witnesses, an equality argument enforces that their evaluations are a reordering of the pairs, a polynomial identity enforces that $s(X)$ is nondecreasing, and a range check enforces strict monotonicity between adjacent entries of $u(X)$ over the $s(X) = 0$ prefix:
\begin{align*}
  &\Eq\bigl( (\polycons(X),\, \polynb(X)),\; (s(X),\, u(X)) \bigr) , \\
  &\bigl( (s(\omega X) - s(X)) \cdot (s(\omega X) - s(X) - 1) \bigr) \big|_{H^{-}} = 0 , \\
  &\Mem\bigl( \restr{(u(\omega X) - u(X) - 1)}{(1 - s(X))(1 - s(\omega X)) \cdot \selq{H^{-}}},\; \Trange \bigr) .
\end{align*}
As the $\textsf{considered} = 0$ identifiers occupy the prefix of $u(X)$, distinctness of adjacent entries implies distinctness of all entries, and strict monotonicity enforces the \emph{uniqueness}.

%% file: sections/evaluation.tex
\section{Evaluation}
\label{sec:evaluation}
We evaluate whether \oursystem makes verifiable HNSW search practical without sacrificing retrieval quality. Concretely, this section answers three questions: (i) does \oursystem preserve the recall of plaintext HNSW (\S\ref{sec:eval-approximation})? (ii) what is the cost of verifiability, and how do proving time, proof size, and verifier time scale with corpus size and embedding dimension (\S\ref{sec:eval-performance})? and (iii) how does \oursystem compare against other verifiable semantic search systems (\S\ref{sec:eval-performance})?

\subsection{Experimental Setup}
\label{sec:eval-setup}
\paragraph{Datasets.}

We evaluate on SIFT1M together with three larger prefixes of SIFT1B, the
BIGANN datasets of 10M, 50M, and 100M vectors. These share SIFT1M's
128-dimensional integer descriptors and let us measure cost as the index
grows from one million to one hundred million vectors.
We also evaluate on Deep10M and GIST1M, two floating-point datasets that
exercise different regimes. Deep10M consists of 96-dimensional neural
network features, representative of the learned embeddings that semantic
search and RAG systems retrieve over. GIST1M consists of 960-dimensional
GIST image descriptors, representative of high-dimensional retrieval
workloads. Together, the integer and floating-point datasets let us
separate the cost of the fixed-step approximation from that of quantization.

\paragraph{Implementation.}
We implement \oursystem in Rust on top of the halo2-axiom library~\cite{halo2-axiom}, over the BN254 curve. We extend the library with an
implementation of the cq lookup argument~\cite{cq} and accelerate the cq
preprocessing on the GPU using the icicle library~\cite{icicle}. We build
each HNSW index with FAISS~v1.14.2 at
\texttt{efConstruction}${}=800$, except for BIGANN-50M
and BIGANN-100M, where we use \texttt{efConstruction}${=}200$.  We use FAISS as the plaintext reference for retrieval quality. Embeddings are quantized to 8-bit field elements, and
all distances are squared $\ell_2$.
We run all experiments on a machine with an AMD EPYC 9654 96-core
processor, 768\,GB of RAM, and an NVIDIA H100 GPU.
Proving and verification run on the CPU; preprocessing is accelerated on the GPU.
The comparison of systems in Figure~\ref{fig:recall-prover} runs on hardware chosen to match the published environment of each system as closely as possible, as stated inline.

\paragraph{Metrics.}
For each benchmark, we evaluate on its standard query set and measure retrieval quality as recall@1, the fraction of queries for which the search returns the true nearest neighbor.
The ground truth is the nearest neighbor in the original, unquantized embedding space.
The reported recall therefore reflects the loss from both quantization and truncation. 
We report proving time, proof size, and verification time per query, along with the distributions of greedy and beam steps each search takes, which indicate the recall cost of any given choice of $T_g$ and $T_b$.

\subsection{Accuracy}
\label{sec:eval-approximation}
\oursystem proves a fixed-budget search over quantized embeddings, and we evaluate the impact of the two approximations on retrieval quality.
For each configuration, we compute the queries using the FAISS library and record the number of greedy and beam steps each query takes.
We then evaluate the quality of \oursystem' search with the step budget fixed at different quantiles of this distribution.
Figure~\ref{fig:truncation-recall} reports recall@1 at each budget against the unbounded float32 FAISS baseline.

\paragraph{Cost of Truncation.}
Recall is preserved when truncating the search well below the maximum observed steps, implying the budget does not need to cover the longest-running queries.
Fixing the budget at the 95th percentile shortens the proof trace by 13\% to 57\% relative to the maximum.
Recall@1 stays within 0.8 points of FAISS at every integer-dataset configuration, and several configurations match the baseline within query-set noise.
The budget itself is tied to the search parameters rather than the corpus, as the beam steps depend primarily on $\mathit{ef}$ and the corpus size contributes only logarithmically through the layer count.
At $(M, \mathit{ef}) = (16, 32)$, the 95th-percentile beam step count rises from 38 on SIFT1M to only 44 on the hundredfold larger BIGANN-100M.

\paragraph{Cost of Quantization.}
On the floating-point datasets Deep10M and GIST1M, whose real-valued embeddings are converted to 8-bit integers, quantization introduces an additional recall loss.
We isolate this loss by measuring at the maximum budget, where the fixed-step search is equivalent to unbounded HNSW.
The remaining gap stems from quantization alone and reaches 5.2 recall@1 points on GIST1M at $(M, \mathit{ef}) = (32, 64)$ (87.4\% against 92.6\%).

\begin{figure}[t]
\centering
\includegraphics[width=\columnwidth]{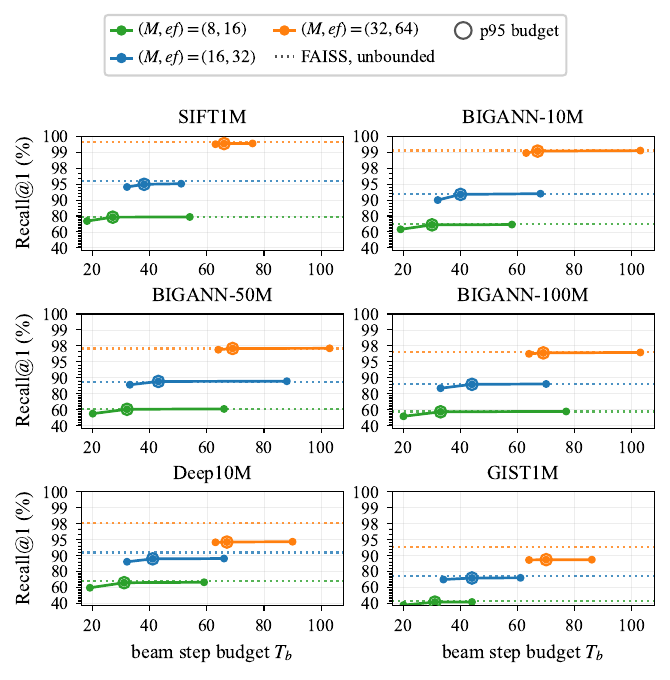}
\caption{Recall@1 of the proven fixed-budget search across beam step budgets $T_b$, measured at the p50, p95, and maximum budget of each configuration.}
\label{fig:truncation-recall}
\end{figure}

\subsection{Performance}
\label{sec:eval-performance}
We evaluate the cost of verifiability for recall@1 above 0.9, as this is a standard target for practical retrieval.
For each dataset, we fix the configuration to the smallest setting that achieves the target recall and report proving time, proof size, and verification time per query in Table~\ref{tab:performance}.
\oursystem' constraint system is determined by the parameters $d$, $\mathit{ef}$, $M$, $T_g$, and $T_b$, and proving cost therefore depends on the configuration required to reach the target recall.

\paragraph{Scaling with Dataset Size.}
At a fixed recall target, proving time increases only modestly as the dataset grows from one to one hundred million vectors.
\oursystem proves a query on SIFT1M in 0.80 seconds, and BIGANN takes 1.22, 1.99, and 1.98 seconds at 10M, 50M, and 100M vectors, respectively.
The corpus size affects proving cost only through the search configuration, as a larger corpus requires a larger $\mathit{ef}$ to reach the recall target, which rises from 26 to 48 while $M = 16$ remains fixed (Table~\ref{tab:performance}).
The proving times at 50M and 100M vectors coincide, as the traces of both configurations fit within the same evaluation domain.

\paragraph{Scaling with Embedding Dimension.}
Beyond the search configuration, proving cost depends on the embedding dimension $d$, as every distance computation in the trace sums over $d$ coordinates, each represented by a separate polynomial.
For GIST1M ($d = 960$), \oursystem proves a query in 36.66 seconds with a 75.5~kB proof, a cost that combines the high dimension with the largest configuration required to reach the recall target (Table~\ref{tab:performance}).
Verification time depends on $d$ as well, taking less than 71 milliseconds on every dataset with $d \leq 128$ but reaching 1.94 seconds on GIST1M.
Dimensionality-reduction techniques make the projected dimension another tunable parameter that trades answer quality for proving cost, which we quantify in our RAG evaluation below. 

\begin{table}[t]
\centering
\footnotesize
\setlength{\tabcolsep}{3pt}
\begin{tabular}{l r r r r r r r}
\toprule
Dataset & $M$ & $\mathit{ef}$ & $T_g$ & $T_b$ & Prove (s) & Verify (ms) & Proof (kB) \\
\midrule
SIFT1M      & 16 &  26 &  6 &  26 &  0.80 &   40 & 16.5 \\
BIGANN-10M  & 16 &  32 & 14 &  33 &  1.22 &   50 & 16.5 \\
Deep10M     & 16 &  40 & 21 &  54 &  1.83 &   58 & 14.5 \\
BIGANN-50M  & 16 &  40 & 20 &  50 &  1.99 &   71 & 16.5 \\
BIGANN-100M & 16 &  48 & 21 &  59 &  1.98 &   70 & 16.5 \\
GIST1M      & 64 & 128 &  9 & 127 & 36.66 & 1940 & 75.5 \\
\bottomrule
\end{tabular}
\caption{Prover performance of \oursystem at the smallest configuration reaching recall@1 above 0.9 on each dataset, with the step budgets $(T_g, T_b)$: proving time, verification time, and proof size per query.}
\label{tab:performance}
\end{table}

\begin{figure}[!t]
\centering
\includegraphics[width=\columnwidth]{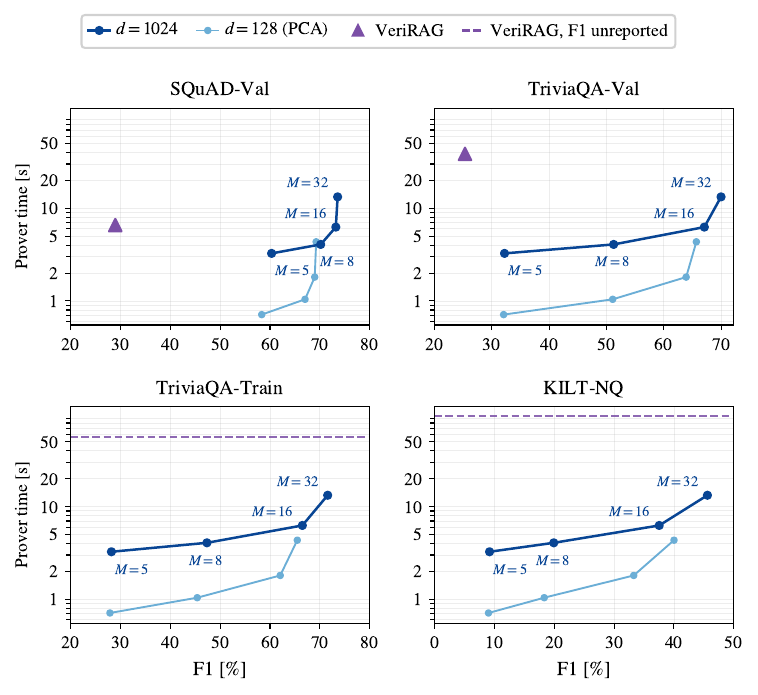}
\caption{F1 versus per-query proving time on SQuAD, TriviaQA-Val, TriviaQA-Train, and KILT-NQ at 256-token chunks, log scale.}
\label{fig:rag-frontier}
\end{figure}
\paragraph{Comparison to Verifiable Search Systems.}

\oursystem' proving cost is lower than that of prior verifiable search systems at every recall level (Figure~\ref{fig:recall-prover}).
We first compare against zkRAG~\cite{zkrag}, concurrent work that likewise proves HNSW search and reports only single-threaded prover times.
For the same parameters, $(M, \mathit{ef}) = (32, 64)$, and the same number of processed level-0 nodes ($N_{\mathrm{exp}} = 2^{12}$), \oursystem proves a query 2.8$\times$ faster, in 18.6 seconds on one thread of a Xeon 8151 against zkRAG's 51.5 seconds on one thread of a Xeon 6126, two processors within 10\% in single-core performance.
The speedup comes alongside a stronger guarantee, as zkRAG's proof reveals the number of steps taken at every layer while \oursystem' reveals nothing beyond the result.
\oursystem' prover further benefits from multithreaded acceleration, with 16 threads reducing proving time 6.6$\times$, from 18.6 to 2.8 seconds.
In the multithreaded setting, \oursystem also outperforms V3DB~\cite{v3db}, an IVF-PQ system, reaching every recall level V3DB attains between 10$\times$ and 45$\times$ faster, as product quantization caps V3DB's recall@1 at 0.50, a ceiling \oursystem exceeds in 0.64 seconds against V3DB's 29.2 seconds.

\paragraph{Application: End-to-end Verifiable RAG.}
We build a complete RAG pipeline with \oursystem and measure how answer quality and proving cost vary with the search configuration.
We instantiate \oursystem' proven fixed-budget HNSW search as the retrieval component, and the remaining components follow VeriRAG's evaluation setting, with BGE-M3~\cite{bgem3} for embedding and Qwen2-7B~\cite{qwen2} for generation.
We evaluate on SQuAD~\cite{squad}, TriviaQA~\cite{triviaqa}, and KILT~\cite{kilt} at 256-token chunks, reporting F1 over the full validation sets for $M = 5$ to $M = 32$ with $\mathit{ef} = 2M$ (Figure~\ref{fig:rag-frontier}).
Unlike the percentile budgets of \S\ref{sec:eval-approximation}, these experiments fix a single corpus-independent budget $T_g = T_b = \mathit{ef}$.
At $M = 32$, \oursystem reaches 73.6 F1 on SQuAD, 70.0 on TriviaQA-Val, 71.6 on TriviaQA-Train, and 45.6 on KILT-NQ, at 13.3 seconds of prover time per query.
Projecting the embeddings to $d = 128$ with principal component analysis reduces proving time 3$\times$, to 4.3 seconds, at a cost of 3.5 to 6.1 F1 points across the datasets.
We compare against VeriRAG on a 24-core Xeon Platinum 8175M, comparable to the 24-core Xeon Gold 5220 used in their evaluation.
\oursystem exceeds VeriRAG's F1 at a fraction of its proving time, reaching 60.3 F1 in 2.5 seconds on SQuAD against VeriRAG's 46.9 in 6.6, and 51.3 in 3.2 seconds on TriviaQA-Val against 41.1 in 38.7.
The gap follows from the retrieval quality bottleneck of IVF-PQ, as the cluster-based index recovers fewer of the most relevant passages and degrades the context the generator receives (\S\ref{sec:related}).
\oursystem' proving cost further depends on the search configuration rather than the corpus size, while VeriRAG's grows directly with the corpus, from 6.6 seconds on SQuAD to 56.0 on TriviaQA-Train and 96.2 on KILT.

\begin{figure}[!t]
\centering
\includegraphics[width=\columnwidth]{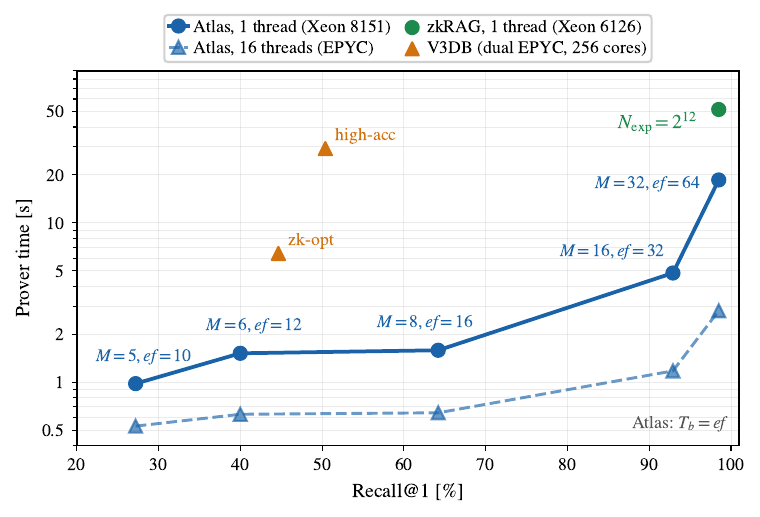}
\caption{Recall@1 versus per-query proving time on SIFT1M, log scale.}
\label{fig:recall-prover}
\end{figure}

%% file: sections/appendix.tex
\section{Extended Background Material}
\label{app:cq}

\fakeparagraph{Log-Derivative Reduction}
Verifying the product relation of \S\ref{sec:bg-arguments} is expensive, particularly when the table is large.
Haböck~\cite{haboeck2022logup} observed that taking the logarithmic derivative of both sides (i.e.\ applying $\frac{d}{dX}\log(\cdot)$) converts the product of linear factors into a sum of their reciprocals.
The product equality holds if and only if the resulting sums are equal.
By the Schwartz--Zippel lemma, equality of the two sums reduces to their equality at a random challenge $\gamma \xleftarrow{\$} \mathbb{F}$:
\begin{equation}
\label{eq:logderiv}
  \sum_{i=0}^{n-1} \frac{1}{\gamma + f(\omega_H^i)}
  = \sum_{j=0}^{N-1} \frac{m(\omega_K^j)}{\gamma + t(\omega_K^j)}
\end{equation}
where $m(X)$ is the polynomial over $K$ encoding the multiplicities.
This reduction from a product to a sum is the basis of the cq lookup argument~\cite{cq}.

\fakeparagraph{The cq Protocol}
cq~\cite{cq} is a lookup argument designed for tables far larger than the query.
It precomputes the table-side sum of the log-derivative identity, and its per-query proving cost therefore scales with the size of the query side alone.
Concretely, cq realizes the log-derivative check~\eqref{eq:logderiv} as a polynomial protocol over KZG commitments~\cite{kzg}.

\fakeparagraph{Preprocessing}
During setup, KZG commitments to the Lagrange basis polynomials $\{L_j^K(X)\}_{j=0}^{N-1}$ over $K$ are computed.
For each table entry $t(\omega_K^j)$, a cached quotient $q_j(X)$ is derived from $L_j^K(X)$ and $t(X)$ such that $Q_A$ (defined below) can later be expressed as a linear combination of the $q_j$.
KZG commitments to the $q_j$ are computed and stored.
This costs $O(N \log N)$ and depends only on the table.

\fakeparagraph{Proving}
The prover constructs two rational functions encoding each side of~\eqref{eq:logderiv}:
\[
  B(X) = \sum_{i=0}^{n-1} \frac{L_i^H(X)}{\gamma + f(\omega_H^i)},
  \qquad
  A(X) = \sum_{j=0}^{N-1} \frac{m_j \cdot L_j^K(X)}{\gamma + t(\omega_K^j)} .
\]
The prover commits to $m(X) = \sum_j m_j L_j^K(X)$, the multiplicity polynomial.
Since only $n$ of the $N$ multiplicities are nonzero, the commitment is computed as a sparse linear combination of the preprocessed Lagrange basis commitments in $O(n)$ scalar multiplications.
The prover also commits to $B(X)$, computed over $H$ in $O(n \log n)$.

The protocol verifies three polynomial identities:
\begin{enumerate}
\item \emph{Well-formedness of $B$.} $B(X)(\gamma + f(X)) - 1 = Q_B(X) \cdot Z_H(X)$, ensuring $B(\omega_H^i) = \frac{1}{\gamma + f(\omega_H^i)}$ for all $i$.
The quotient $Q_B$ costs $O(n \log n)$.
\item \emph{Well-formedness of $A$.} $A(X)(\gamma + t(X)) - m(X) = Q_A(X) \cdot Z_K(X)$, ensuring $A(\omega_K^j) = \frac{m_j}{\gamma + t(\omega_K^j)}$ for all $j$.
The commitment to $Q_A$ is computed as a sparse linear combination of the preprocessed cached quotient commitments in $O(n)$~\cite{cq}.
\item \emph{Sum equality.} $\sum_{x \in H} B(x) = \sum_{x \in K} A(x)$, verified via a univariate sumcheck.
The sumcheck quotient $Q_C$ costs $O(n \log n)$.
\end{enumerate}

\allowdisplaybreaks
\section{Equivalence Proofs}
\subsection{Equivalence of the Unified Greedy Search}
\label{app:proof-greedy}

\begin{theorem}[Equivalence of $G_{\mathrm{upper}}$]
\label{thm:greedy-equivalence}
Let $G$ be a multi-layer graph with layers $G_0, G_1, \ldots, G_L$, each a graph $G_\ell$ with vertex set $V(G_\ell)$ and edge set $E(G_\ell)$, let $q \in \mathbb{R}^d$ be a query, and let $\mathit{ep}$ be an entry point in $G_L$.
Let $v^*_1$ be the result of the greedy search of the HNSW algorithm as
in lines~\ref{ln:epset}--\ref{ln:eplast} of~\Cref{alg:hnsw_search}, and
let $T^*_\ell$ denote the number of greedy steps taken at layer $\ell$.
Define $G_\mathrm{upper} = (V_\mathrm{upper}, E_\mathrm{upper})$ by
\begin{align*}
V_\mathrm{upper} ={}& \bigl\{(v, \ell) : \ell \in [L],\,
    v \in V(G_\ell)\bigr\} \cup \bigl\{(v, 0) : v \in V(G_1)\bigr\},\\
E_\mathrm{upper} ={}& \bigl\{((u, \ell), (w, \ell)) : \ell \in [L],\,
    (u, w) \in E(G_\ell)\bigr\}\\
{}\cup{}& \bigl\{((v, \ell), (v, \ell{-}1)) : \ell \in [L],\,
    v \in V(G_\ell)\bigr\},
\end{align*}
i.e., each layer contributes a labeled copy of its edges, and
drop-down edges connect every node to its copy one layer below.
Then the single invocation of greedy search
$\textsc{Search-Layer}'_{1}(G_\mathrm{upper}, q, \{(\mathit{ep}, L)\})$
terminates at $(v^*_1, 0)$.
The total number of traversal steps taken by $\textsc{Search-Layer}'$ is
$T_g = \sum_{\ell=1}^{L} T^*_\ell + L$.
\end{theorem}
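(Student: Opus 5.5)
We argue by induction on the layers, from $\ell = L$ down to $\ell = 1$. The claim is that the walk of $\textsc{Search-Layer}'_1$ on $G_\mathrm{upper}$, started at $(\mathit{ep}, L)$, passes through $(v^*_{\ell}, \ell)$ and then $(v^*_\ell, \ell-1)$ after exactly $\sum_{\ell' = \ell}^{L} T^*_{\ell'} + (L - \ell + 1)$ steps. Here $v^*_\ell$ is the entry point returned by $\textsc{Search-Layer}_1(G_\ell, q, \cdot)$ in \Cref{alg:hnsw_search}, so $v^*_{L+1} = \mathit{ep}$ and $v^*_1$ is the greedy result. Taking $\ell = 1$ gives termination at $(v^*_1, 0)$ with $T_g = \sum_{\ell} T^*_\ell + L$.

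First we make the width-one search explicit. With $\mathit{ef} = 1$, both $W$ and $C$ hold a single node, so $\textsc{Search-Layer}_1$ is the greedy walk: from the current node $c$, it moves to the neighbor nearest to $q$ if that neighbor is strictly closer than $c$, and stops otherwise. The same holds for $\textsc{Search-Layer}'_1$ under the lexicographic key $(d(\cdot,q), \ell)$. We fix a consistent tie-breaking rule among equidistant neighbors in both algorithms, for instance by identifier, and note that this is the implicit $\arg\min$ convention.

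\textbf{Step within a layer.} The out-neighbors of $(u,\ell)$ in $G_\mathrm{upper}$ are the copies $(w,\ell)$ for $(u,w)\in E(G_\ell)$, together with the single drop-down node $(u,\ell-1)$. The drop-down node has distance exactly $d(u,q)$, since it carries the same embedding, and it has the smaller layer index. Suppose some intra-layer neighbor $w$ satisfies $d(w,q) < d(u,q)$. Then that neighbor beats the drop-down node in the lexicographic order, and among intra-layer neighbors the order agrees with the one used by $\textsc{Search-Layer}_1(G_\ell,\cdot)$. Hence the unified walk makes the same move as the per-layer walk. Suppose instead that no neighbor is strictly closer. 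Then the per-layer search terminates at $u$, and the drop-down node $(u,\ell-1)$ is lexicographically strictly smaller than both the current node $(u,\ell)$ and every intra-layer neighbor, whose distances are all at least $d(u,q)$. So the unified walk takes the drop-down edge, at a cost of one step.

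\textbf{Induction and termination.} Applying this step argument $T^*_\ell$ times and then once more yields the inductive step. Arriving at $(v^*_\ell,\ell-1)$ coincides with the next layer's search starting from entry point $v^*_\ell$ (\Cref{alg:hnsw_search}, line~\ref{ln:eplast}). At layer $0$ the node $(v^*_1,0)$ has no out-edges, because $E_\mathrm{upper}$ adds nothing at layer $0$. The walk therefore stops there: the candidate set empties with no improving neighbor. Summing the steps gives the stated $T_g$.

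The main obstacle is pinning down step semantics and tie cases. We must ensure that "greedy step" in $T^*_\ell$ counts only moves and not the final termination check. We must also handle the subtle case where an intra-layer neighbor has distance equal to $d(u,q)$. Lexicographically, such a neighbor $(w,\ell)$ loses to $(u,\ell-1)$, which is consistent with $\textsc{Search-Layer}$ breaking only on strict improvement (the $>$ test at line~\ref{ln:break}, combined with the strict $<$ test at line~\ref{ln:distcheck} when $|W| = 1$). We plan to state this counting convention explicitly and verify it against the $\mathit{ef}=1$ specialization of \Cref{alg:search_layer}.
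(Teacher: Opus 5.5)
Your proposal is correct and follows the paper's proof essentially step for step. Both use downward induction over the layers, with the equidistant drop-down copy losing lexicographically to any strictly closer intra-layer neighbor but winning at a local minimum; both then conclude with termination at the edgeless node $(v^*_1,0)$ and one extra step per layer for the descent.

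One small correction: with $\mathit{ef}=1$ the queue $C$ of \Cref{alg:search_layer} is never truncated, so it can keep superseded neighbors. Each such entry is strictly farther than the current element of $W$ and can only trigger the break at line~\ref{ln:break}, so your reduction of $\textsc{Search-Layer}_1$ to the greedy walk still stands.
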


\begin{proof}
By downward induction on $\ell$.
At layer $\ell \geq 1$, the neighbors of the current node $(v, \ell)$
in $G_{\mathrm{upper}}$ are its intra-layer neighbors and the
drop-down $(v, \ell{-}1)$.
Since $(v, \ell)$ and $(v, \ell{-}1)$ share the same embedding,
$d((v, \ell{-}1), q) = d((v, \ell), q)$, so under the lexicographic
comparison of $\textsc{Search-Layer}'$ the drop-down cannot be the
greedy choice while any strictly closer intra-layer neighbor exists.
$\textsc{Search-Layer}'$ therefore reproduces
$\textsc{Search-Layer}_1(G_\ell, q, \cdot)$ step by step until
reaching the local minimum $v^*_\ell$, taking exactly $T^*_\ell$ steps.
At that point all intra-layer neighbors have distance
$\geq d(v^*_\ell, q)$, and since ties are broken in favor of the
lower layer, the drop-down is the greedy choice and the walk descends
to $(v^*_\ell, \ell{-}1)$.
Repeating at each layer down to $\ell = 1$, the walk reaches
$(v^*_1, 0)$, which has no outgoing edges, so the search terminates
there.
In total, the walk takes $T^*_\ell$ intra-layer steps at each layer
plus one drop-down per layer, giving
$T_g = \sum_{\ell=1}^{L} T^*_\ell + L$ steps.
\end{proof}

\subsection{Proof of Lemma~\ref{lem:order-independence}}
\label{app:proof-order}
\begin{proof}
Let $T = \mathrm{top}_{\mathit{ef}}(W \cup A)$ and let $W_{\mathrm{final}}$ denote $W$ after all insertions.
If $|W \cup A| \le \mathit{ef}$, no eviction occurs and $W_{\mathrm{final}} = W \cup A = T$.
Otherwise, eviction occurs when an insertion brings $|W|$ to $\mathit{ef} + 1$ and removes the furthest element.
Any $e \in T$ has at most $\mathit{ef} - 1$ elements closer to $q$ in all of $W \cup A$, so among the $\mathit{ef} + 1$ elements in $W$ at the moment of eviction, at least one is further than $e$, and $e$ is never the one removed.
Therefore $T \subseteq W_{\mathrm{final}}$, and as $|W_{\mathrm{final}}| = \mathit{ef} = |T|$, $W_{\mathrm{final}} = T$.
\end{proof}

\subsection{Proof of Lemma~\ref{lem:pop-invariance}}
\label{app:proof-invariance}
\begin{proof}
At $t = 0$, $W_0 = C_0 = \{\mathit{ep}\}$ in both executions.
At each subsequent step, $W_t$ agrees by
Lemma~\ref{lem:order-independence}, so the same neighbors are processed and the same nodes enter both $C$ and $W$.
Let $C_t, C'_t$ denote the candidate sets in the two
executions.
$C_t \triangle C'_t$ consists of nodes that were admitted into
$W$ temporarily and then evicted, so for any
$e \in C_t \triangle C'_t$,
$d(e, q) > \max_{w \in W_t} d(w, q)$, and $e$ triggers
termination at line~\ref{ln:break} before being extracted,
leaving the same node to be extracted at line~\ref{ln:pop} in
both executions.
\end{proof}

\subsection{Proof of \Cref{thm:beam-equivalence}}
\label{app:proof-beam}
\begin{proof}[Proof sketch]
We show that the selected node and the state of $W$ agree at every iteration across both algorithms.
By Lemma~\ref{lem:pop-invariance}, the sequence of selected nodes in \textsc{Search-Layer} is invariant under insertion order.
Phase~2 reproduces this sequence: the selection rule $\arg\min\{d_i : p_i = 0\}$ picks the same node that \textsc{Search-Layer} would extract from $C$ (line~\ref{ln:pop}), and the batched merge produces the same $W$ as sequential insertion by Lemma~\ref{lem:order-independence}.
Since the selected node and $W$ agree at every iteration, the final output is identical.
Once every candidate in $W$ is processed, \textsc{Search-Layer}'s next extracted node is further than $W$'s furthest element and the search terminates, while Phase~2 finds no unprocessed candidate and leaves $W$ unchanged, so any budget $T_b \geq T^*$ yields the same final set.
\end{proof}
\begin{figure}[!b]
\centering
\includegraphics[width=\columnwidth]{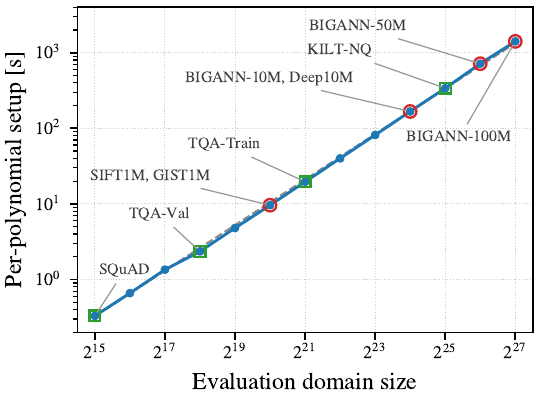}
\caption{One-time cq preprocessing per committed polynomial across evaluation domain sizes $N$, measured on the H100, with the evaluated datasets marked at their domains. The dashed line is $10^{-5} \cdot N$ seconds.}
\label{fig:preprocessing}
\end{figure}
\section{Preprocessing}
Committing the embedding and edge tables is a one-time cost per index.
\Cref{fig:preprocessing} reports the preprocessing time per committed polynomial, the generation of one column's cached quotients, across domain sizes from $2^{15}$ to $2^{27}$, accelerated on the H100 GPU through icicle~\cite{icicle}.
The cost grows near-linearly with the domain size, at roughly ten microseconds per table entry.
For the evaluated datasets, one polynomial costs 9.6 seconds on the million-node SIFT1M and GIST1M, 167 seconds for BIGANN-10M and Deep10M, and 716 and 1411 seconds for BIGANN-50M and BIGANN-100M, whose tables fit within the $2^{20}$, $2^{24}$, $2^{26}$, and $2^{27}$ domains.
The RAG corpora of \S\ref{sec:eval-performance}, chunked at 256 tokens, span ${\sim}20$k chunks for SQuAD, ${\sim}250$k for TriviaQA-Val, ${\sim}2$M for TriviaQA-Train, and ${\sim}30$M for KILT-NQ, which fit within the $2^{15}$, $2^{18}$, $2^{21}$, and $2^{25}$ domains.
At these sizes, a committed polynomial costs 0.33 seconds for SQuAD, 2.4 seconds for TriviaQA-Val, 19.6 seconds for TriviaQA-Train, and 339 seconds for KILT-NQ.